\documentclass[a4paper]{article}
\usepackage{amsmath}
\usepackage{amsthm}
\usepackage{comment}
\usepackage{graphicx} 
\usepackage[utf8]{inputenc}
\usepackage[capitalise]{cleveref}
\usepackage{xcolor}
\usepackage[subrefformat=parens]{subcaption}
\usepackage{url}
\usepackage{multirow}
\usepackage{nccmath} 

\usepackage{bm}
\usepackage{natbib} 
\usepackage[margin=25truemm]{geometry}

\usepackage{amssymb}
\newtheorem{proposition}{Proposition}
\newtheorem{lemma}{Lemma}

\title{Hypercongestion, autonomous vehicle, and \\ urban spatial structure}  
\author{Takao Dantsuji\thanks{Monash Institute of Transport Studies, Monash University, 23 College Walk,
Clayton, Victoria 3800, Australia Takao.Dantsuji@monash.edu} \thanks{Institute of Science and Engineering, Kanazawa University, Kakuma-machi, Kanazawa, Ishikawa 920-1192, Japan} \and Yuki Takayama \thanks{Department of Civil and Environmental Engineering, Tokyo Institute of Technology, 2-12-1 W6-9, Ookayama, Meguro, Tokyo 152-8550, Japan, takayama.y.af@m.titech.ac.jp}} 
\date{}

\begin{document}
\maketitle

\begin{abstract}
This paper examines the effects of hypercongestion mitigation by perimeter control and the introduction of autonomous vehicles on the spatial structures of cities. By incorporating a bathtub model, we develop a land use model where hypercongestion occurs in the downtown area and interacts with land use.  We show that hypercongestion mitigation by perimeter control decreases the commuting cost in the short-run and results in a less dense urban spatial structure in the long-run. Furthermore, we reveal that the impact of autonomous vehicles depends on the presence of hypercongestion. Introduction of autonomous vehicles may increase the commuting cost in the presence of hypercongestion and cause a decrease in suburban population, but make cities spatially expanded outward.This result contradicts that of the standard bottleneck model. When perimeter control is implemented, the introduction of autonomous vehicles decreases the commuting cost and results in a less dense urban spatial structure. These results show that hypercongestion is a key factor that can change urban spatial structures. 

    {\flushleft{{\bf Keywords:} hypercongestion; perimeter control; autonomous vehicle; bathtub model; land use model.}}
\end{abstract}

\section{Introduction}
\subsection{Background}
Traffic demand is highly concentrated during rush hours in urban cities, which causes hypercongestion (the downward--sloping part of the inverted U--shaped relationship between traffic flow and traffic density), and dispersing traffic demand can change the spatial distribution of residents in the long-run. Recent research has shown that downtown areas experience network-wide hypercongestion \cite[e.g.,][]{daganzo2007urban,geroliminis2008existence} and that  the temporal concentration of traffic demand leads to network {\it capacity drop} \cite[e.g.,][]{small2003hypercongestion,geroliminis2009cordon, arnott2013bathtub}. Capacity drop is phenomenon where  throughput (e.g., flow or outflow) decreases with an increase in the number of vehicles circulating in networks.

A number of studies have proposed traffic demand management (TDM) strategies, such as congestion pricing \cite[e.g.,][]{geroliminis2009cordon} and perimeter control \cite[e.g.,][]{haddad2012stability, tsekeris2013city}, to  disperse traffic demand during peak periods for hypercongestion mitigation as capacity drop causes the inefficient use of transportation systems. However, most of them focused on problems under the assumption that commuters do not relocate (i.e., fixed origin--destination demand). That is, they ignored changes in residents' spatial distribution in response to their commuting behavior changes and instead focused on the short-run effects of  TDM strategies. Since commuting and land use patterns influence each other substantially  \cite[][]{kanemoto1980theories}, the long-run impacts of TDM strategies on land use should be understood to promote efficient and sustainable urban developments. 

Models of urban spatial structures can describe the interaction between commuting and residential locations \cite[][]{alonso1964historic}. Traditional models that employ static congestion models \cite[e.g.,][]{anas1998urban} have been extended to incorporate  dynamic bottleneck congestion \cite[e.g., ][]{arnott1998congestion, gubins2014dynamic, takayama2017bottleneck, fosgerau2018vickrey,xu2018pareto,fosgerau2019commuting, takayama2020gains, lianalytical}. These extended models highlight the significance of the temporal distribution of traffic demand and dynamic congestion phenomena in  long-run equilibrium; however, they cannot incorporate hypercongestion, where capacity can drop over time, unlike in bottleneck congestion. A suitable model is yet to be developed to examine the long-run effectiveness of TDM strategies for hypercongestion mitigation. Furthermore, whether hypercongestion is a key factor that can alter urban spatial structures remains an open question in the literature.

In this new era of autonomous vehicles, hypercongestion  can play a more important role in TDM strategies. According to \cite{van2016autonomous}, autonomous vehicle can enhance network capacity because they safely drive closer to each other than vehicles driven by humans  (``Network capacity" effect hereafter), and passengers in autonomous vehicles are less concerned  about travel times because they do not need to drive anymore (``VOT effect" hereafter). \cite{van2016autonomous} showed that both network capacity and VOT effects decrease the commuting cost in fully automated traffic bottleneck congestion. Furthermore, existing works \cite[e.g.,][]{lu2020impact, moshahedi2022macroscopic} demonstrated the network flow can be enhanced by the network capacity effect. However, the VOT effect on network efficiency  has been insufficiently studied because it is not as simple as the network capacity effect. The VOT effect reduces the the cost of travel time; however, an increase in the VOT effect is expected to worsen capacity drop due to the temporal concentration of traffic demand.  Thus, traffic demand patterns with autonomous vehicles will become increasingly complex in the presence of hypercongestion.  A model that systematically analyzes these impacts on the temporal and spatial distributions of traffic demand should be developed to  investigate the long-run effects of TDM strategies in this new era of autonomous vehicles properly.

In this paper, by incorporating a bathtub model, we develop a land use model where hypercongestion occurs in the downtown area and interacts with land use. Our findings show that the implementation of perimeter control for hypercongestion mitigation decreases the commuting cost, and results in a less dense urban spatial structure. Furthermore,  we find that the use of autonomous vehicles may increase the commuting cost due to the severe capacity drop caused by the temporal concentration of traffic demand, and cause an increase in downtown population, but make cities spatially expanded outward. This result contradicts that of the standard bottleneck model. When perimeter control is implemented, the introduction of autonomous vehicles decreases the commuting cost, and results in a less dense urban spatial structure. These results show that hypercongestion is a key factor that can change urban spatial structures. 

\subsection{Literature Review}
Network-wide hypercongestion, namely Macroscopic Fundamental Diagrams (MFDs) is a powerful tool for describing  network-wide traffic dynamics; this approach relates network flow (or trip completion rate) to network density (or accumulation of vehicles). The idea of macroscopic traffic theory was proposed by \cite{godfrey1969mechanism}, further investigated by \cite{daganzo2007urban}, and  empirically analyzed by \cite{geroliminis2008existence}.  Traffic management approaches based on MFDs have been studied, such as pricing \cite[e.g.,][]{zheng2012dynamic, simoni2015marginal,dantsuji2021simulation, genser2022dynamic}, perimeter control \cite[e.g.,][]{daganzo2007urban, tsekeris2013city,haddad2014robust}, route guidance \cite[e.g.,][]{yildirimoglu2015equilibrium, yildirimoglu2018hierarchical}, and road space allocation \cite[e.g., ][]{zheng2013distribution, chiabaut2015evaluation, zheng2017macroscopic}. MFDs have also been utilized for other purposes, such as  dynamic traffic demand estimation \cite[e.g., ][]{dantsuji2022novel} and network performance indication \cite[e.g.,][]{loder2019understanding}. 

Perimeter control is a successful application of MFDs in TDM strategies as an effective and easy-to-implement tool. It aims to control the entry flow at the perimeter boundary of a target area to maximize the trip completion rate. This approach has been extended to  multiregion networks \cite[e.g., ][]{geroliminis2012optimal, haddad2012stability, ramezani2015dynamics,haddad2020adaptive, sirmatel2021modeling, fu2021perimeter, batista2021role, zhou2023scalable, chen2022data, kouvelas2023linear}, perimeter control with boundary queues \cite[e.g., ][]{haddad2017optimal, ni2020city, guo2020macroscopic, li2021perimeter} and  route guidance \cite[e.g., ][]{sirmatel2017economic, ding2017traffic, fu2021perimeter}, and  bimodal transportation systems \cite[e.g.,][]{ampountolas2017macroscopic, haitao2019providing, chen2022passenger, dantsuji2022perimeter}. Considerable effort has been dedicated to perimeter control schemes, but there are no studies that examined the impacts of autonomous vehicles (i.e., VOT and network capacity effects) on their effectiveness. Furthermore, despite the significant influence of perimeter control on traffic demand, most of the studies made the assumption that origin-destination demand is fixed. This hinders a comprehensive understanding of the interplay between urban and transportation developments.   A critical gap is the lack of a methodology that connects short-run commuters' decisions (e.g., trip timing)  with their residential location choices in the long-run.

Traditional land use models that consider the trade-off between land rent and commuting costs in a monocentric city effectively analyze residential location patterns. Static congestion models \cite[e.g., ][]{kanemoto1980theories, anas1998urban} have been extended to dynamic bottleneck models \citep{arnott1998congestion} to incorporate the impact of the trip timing decisions of commuters into their residential location choices in the long-run. Several aspects have been incorporated into the standard bottleneck model as extensions, such as an incentive for commuters to spend time at home \citep{gubins2014dynamic}, bimodal transportation systems \citep{xu2018pareto},  an open city model \citep{fosgerau2018vickrey}, a tandem bottlenecks \citep{fosgerau2019commuting}, a bottleneck with a stochastic location \citep{lianalytical},  and commuters' heterogeneity  \citep{takayama2017bottleneck, takayama2020gains}.  These aspects can alter urban spatial structures in the long-run, but whether hypercongestion has the same effect remains an open question in the literature. 

The trip timing decisions of commuters in the presence of hypercongestion can be described using bathtub models,  namely dynamic user equilibrium models for departure time choices in urban cities with hypercongestion \cite[e.g.,][]{small2003hypercongestion, geroliminis2009cordon, arnott2013bathtub, fosgerau2013hypercongestion, vickrey2020congestion, bao2021leaving, ameli2022departure}. Bathtub models have been extended to trip length heterogeneity  \cite[e.g.,][]{fosgerau2013hypercongestion, lamotte2018morning}, cruising-for-parking \cite[e.g.,][]{geroliminis2015cruising, liu2016modeling}, and bimodal transportation systems \cite[e.g.,][]{gonzales2012morning, gonzales2015coordinated, dantsuji2022perimeter}. However, none of them considered  changes in residents' spatial distribution in response to their commuting behavior changes.

 Our recent work \citep{dantsuji2022perimeter} developed the bimodal bathtub model, and investigated the travelers’ behavior changes in response to perimeter control with transit priority. However, they focused only on the short-run equilibrium. How to combine the short-run equilibrium in the presence of hypercongestion with the long-run equilibrium remains a challenging topic. Also, \cite{dantsuji2022perimeter} did not consider the impacts of autonomous vehicles. While there are a vast of literature on the network capacity and VOT effects of autonomous vehicles, no one proves that the VOT effect may cause higher commuting cost in the presence of hypercongestion. To the best of our knowledge, our study is the first to systematically analyze a model in which commuters choose their trip timing in the presence of hypercongestion in the short-run and residential location in the long-run. 

\vspace{12pt}

The remaining of this paper is organized as follows. Section 2 shows the development of the model. Section 3, we characterize the model equilibrium. Section 4 presents the formulation of the model during perimeter control. The equilibrium conditions during perimeter control are studied in Section 5. Numerical examples are provided in Section 6, and we conclude this paper in Section 7.

\section{Model}
Consider a monocentric city that has downtown and suburban areas. All job opportunities are found and homogeneously distributed in the downtown area, whereas there are residential areas in both downtown and suburban areas.  The downtown area is in the center of the city, and a residential location in the suburban area is indexed by a distance $x$ from the edge of the downtown area (Fig~\ref{fig:model}). The areas of downtown and suburban at location $x$ have  fixed land areas of $A_d$ and $A_s(x)$, respectively. We assume that the land is owned by absentee landlords, as  is common in the literature. An $N$ continuum of homogeneous commuters  have identical preferences and the numbers of downtown commuters and suburban commuters at location $x$ are denoted by $N_d$ and $N_s(x)$, respectively.

\begin{figure}[t]
\centering
\includegraphics[width=0.7\textwidth]{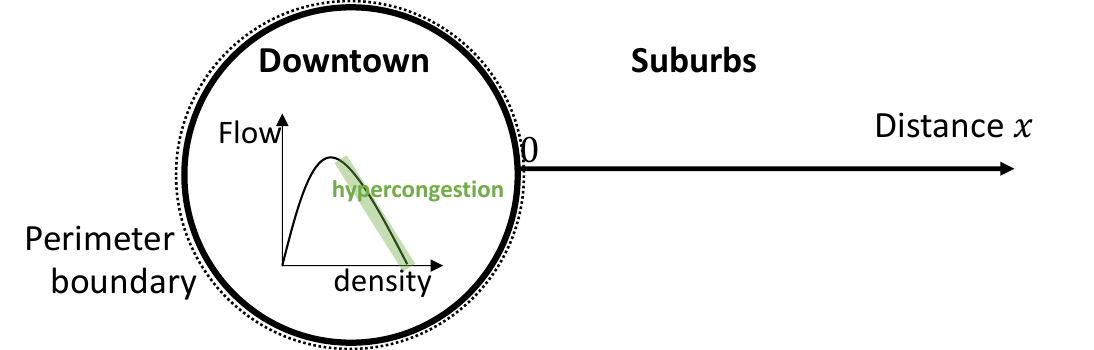}
\caption{Model structure}
\label{fig:model}
\end{figure}

\subsection{Hypercongestion and commuting cost}

The downtown area has homogeneous topological characteristics by proper partitioning approaches \cite[e.g., ][]{ji2012spatial, dantsuji2019cross},  thus showing a well-defined
relationship between space-mean flow and density. Note that as we assume that the job opportunities are homogeneously distributed in the downtown area, the shape of the MFD does not change even when commuters' residential locations change. The congestion dynamics in the downtown area are described using a bathtub model, whereas we assume that one can travel at a free-flow speed in the suburban area.  To incorporate hypercongestion, we employ the Greenshields model for the space-mean speed  as follows. 
\begin{align}\label{eq:speed}
 v(t) = v_f \left( 1 - \frac{n(t)}{n_j} \right),
\end{align}
where $v(t)$ is the space-mean speed at time $t$, $v_f$ is the free-flow speed, $n(t)$ is the vehicle accumulation in the downtown area at time $t$ and $n_j$ is jam accumulation. 

Since the downtown area is modeled as a system with inflows and outflows and whose traffic conditions are governed by bathtub congestion dynamics, the time evolution of vehicle accumulation, $\dot{n}(t)$, is 
\begin{align}\label{eq:accumulation_evolution}
\dot{n}(t) = I (t) - G(t),
\end{align}
where $I(t)$ is the inflow and $G(t)$ is the outflow at time $t$. The outflow is formulated using the network exit function (NEF) as follows:
\begin{align}\label{eq:NEF}
G(t) = \frac{n(t) v(t)}{L},
\end{align}
where $L$ is the average trip length in the downtown area.

The travel time in the downtown area is assumed to be determined by a single instant of time for  tractability  \citep{small2003hypercongestion, geroliminis2009cordon} and calculated as
\begin{align}\label{eq:TT_assumption}
T(t) \approx \frac{L}{v(t)}.
\end{align}
All suburban commuters have identical trip lengths $L$ in the downtown area.

This instantaneous travel time assumption is based on the so-called accumulation-based modelling, where the outflow is a function of the instantaneous accumulation \citep{mariotte2017macroscopic}. According to \cite{mariotte2017macroscopic}, in cases where the inflow changes drastically over time, information propagates too fast, which leads to a physical inconsistency. An alternative approach is the trip-based model \cite[e.g., ][]{fosgerau2015congestion, mariotte2017macroscopic, lamotte2018morning,  jin2020generalized}, which can tackle this inconsistency. However, the trip-based model is typically intractable. Even though the inconsistency occurs at the beginning and end of the rush hour when the inflow changes drastically, it does not affect the qualitative results in this paper. Therefore, we employ the accumulation-based model for tractability.

We assume that the downtown commuters  commute by walking or cycling and the suburban commuters travel by car, which is similar to the modeling of \cite{arnott1998congestion}. The commuting cost of a downtown commuter and a suburban commuter at location $x$ who arrive at work at time $t$ ($C_d(t)$ and $C_s(x, t)$, respectively) are expressed as
\begin{subequations}\label{eq:travel_cost}
\begin{align}
&C_d(t) = \alpha T_d + s(t),  \label{eq:commuting_cost_downtown} \\
&C_s(x, t) =  \alpha \left( T(t) + \tau x \right) + s(t), \label{eq:commuting_cost_suburban}  \\
& s(t) = 
\begin{cases}
\beta \left( t^* - t \right) & \qquad  \text{if } t \leq t^*  \\
\gamma \left( t -  t^* \right) & \qquad \text{if }t > t^* 
\end{cases}, \label{eq:schedule_delay}
\end{align} 
\end{subequations}
where $T_d$ represents the constant travel time of the downtown commuters, $T(t)$ represents the travel time in the downtown area at time $t$, $\tau x$ represents the suburban free-flow travel time of commuters residing at $x$ (i.e., $\tau = 1 / v_f$),  $t^*$ represents the desired arrival time, and $\alpha$, $\beta$ and $\gamma$ are marginal costs of travel time, earliness, and lateness. The first and second terms on the RHS of Eq.~(\ref{eq:commuting_cost_suburban}) are the travel time cost and schedule delay cost, respectively. That is, we assume that commuters have ``$\alpha$-$\beta$-$\gamma$'' type preference \citep{arnott1993structural}.

We also consider a situation where every vehicle is an autonomous vehicle. Autonomous vehicles are expected to have two effects: the VOT effect and the network capacity effect \citep{van2016autonomous}.  In this study, the former effect is represented by a reduction in the VOT. It is $\eta \alpha$ where $\eta$ is the VOT effect parameter  ($\frac{\beta}{\alpha}<\eta \leq 1$). The latter effect is captured by an increase in network capacity, which is $\xi n_j$ where  $\xi$ is the network capacity effect parameter ($\xi\geq1$). This effect increases not only the network capacity but also the critical accumulation, which is consistent with the simulation analysis by \cite{lu2020impact}.

There is a large literature on the VOT effect since it can be estimated by stated preference surveys. \cite{steck2018autonomous} found privately owned autonomous vehicle reduces value of travel time savings (VTTS) by 31 \%. \cite{zhong2020will} investigated the VOT effects in suburban, urban, and rural areas, and found 24 \% decrease in the urban area.  \cite{de2019impact} and  \cite{kolarova2021impact} found 26 \% and  41 \%  VTTS reduction, respectively. The range of the VOT effect is from 24 \% to 41 \%.

As \cite{van2016autonomous} indicated,  the network capacity effect is still an open question in literature. The literature showed that an increase in capacity ranges from $1$ \% to $414$ \% (see Table 1 in \cite{van2016autonomous}). At network level, \cite{lu2020impact} showed that the network capacity increases by 16 \% and \cite{moshahedi2022macroscopic} found the enhancement of 30 \% in the outflow. However, these studies do not use the real dataset. \cite{huang2023characterizing} found the network capacity effect ranging from 2.9  \%  to 19 \%  through simulation analysis calibrated by  public autonomous vehicle datasets.

\subsection{Commuter preferences}
We next incorporate the commuting cost in the presence of hypercongestion into commuter preferences. The utility of downtown commuters is expressed as the following Cobb--Douglas utility function: 
\begin{align}
 & u_d(z_d(t), a_d(t)) = \{z_d(t)\}^{1-\mu} \{a_d(t)\}^\mu,    
\end{align}
where $\mu \in (0,1)$, $z_d(t)$ represents the consumption of the num\'{e}raire good and $a_d(t)$ represents the lot size of housing which downtown commuters consume.  The budget constraint is 
\begin{align}
& w = z_d(t) + ( r_d + r_A ) a_d(t) + C_d(t),
\end{align}
where $w$ represents their income, $r_A > 0$ represents the exogenous agricultural rent, and $r_d + r_A$ represents the downtown land rent. The first-order conditions of the utility maximization problem are
\begin{subequations}
\begin{align} \label{eq:first_order_downtown}
& z_d(t) = \left( 1 - \mu \right) y_d(t), \\ 
& a_d(t) = \frac{\mu y_d(t)}{r_d + r_A}, \\
& y_d(t) = w - c_d(t),
\end{align}
\end{subequations} 
where $y_d(t)$ represents the income net of commuting cost earned by a downtown commuter who  arrives at work at time $t$. Substituting this into the utility function, we obtain the indirect utility function as
\begin{align}\label{eq:indirect_utility_downtown}
U_d\left(y_d(t), r_d+r_A\right) = (1-\mu)^{1-\mu} \mu^\mu y_d(t) \left( r_d + r_A \right)^{-\mu}.
\end{align}

Similarly, we  formulate a land use model for the suburban commuters. Their utility and budget constraint are respectively expressed as
\begin{align}
 & u_s(z_s(x,t), a_s(x,t)) = \{z_s(x,t)\}^{1-\mu} \{a_s(x,t)\}^\mu, \\
 & w = z_s(x,t) + ( r_s(x) + r_A ) a_s(x,t) + C_s (x,t),
\end{align}
where $z_s(x,t)$  represents the consumption of the num\'{e}raire good, $a_s(x,t)$ represents the lot size of housing consumed by the suburban commuters at $x$, and $r_s(x) + r_A$ represents the land rent at $x$. Then, the first-order conditions of the utility maximization problem are
\begin{subequations}
\begin{align} \label{eq:first_order_suburban}
& z_s(x,t) = \left( 1 - \mu \right) y_s(x,t), \\
& a_s(x,t) = \frac{\mu y_s(x,t)}{r_s(x) + r_A}, \\ 
& y_s(x,t) = w - c_s(x,t), 
\end{align}
\end{subequations}
where $y_s(x,t)$ represents the income net of commuting cost earned by a commuter who resides at $x$ and arrives at work at time $t$. The indirect utility function is 
\begin{align}\label{eq:indirect_utility_suburban_commuters}
& U_s\left(y_s(x,t), r_s(x)+r_A\right) =  (1-\mu)^{1-\mu} \mu^\mu y_s(x,t) \left( r_s(x) + r_A \right)^{-\mu}.
\end{align}

\section{Equilibrium}
\subsection{Equilibrium conditions}
We assume that the commuters decide their residential locations in the long-run, whereas they choose their trip timings in the short-run, which is similar to the modeling of \cite{gubins2014dynamic},  \cite{takayama2017bottleneck}, and \cite{takayama2020gains}. That is, the downtown commuters and suburban commuters at location $x$ minimize commuting cost $C_d(t)$ and $C_s(x,t)$, respectively by selecting their arrival time $t$ at work taking their residential locations as given in the short-run. Each commuter chooses a residential location (the downtown area or location $x$ in the suburban area) so as to maximize his/her utility in the long-run. 

\subsubsection{Short-run equilibrium conditions} \label{section:short-run_equilibrium}
In the short-run, the commuters only decide their trip timing, which implies that the numbers of downtown commuters and suburban commuters residing at $x$ are assumed to be fixed. The short-run equilibrium conditions differ according to their residential locations. According to Eq.~(\ref{eq:commuting_cost_downtown}), all downtown commuters  arrive at $t=t^*$, and they incur a constant commuting cost  $\alpha T_d$. Therefore, the short-run equilibrium cost $C_d^*$  expressed as  
\begin{align}
& C_d^{*}  =  \alpha T_d. \label{eq:short-run_cost_downtown}
\end{align}

As for the suburban commuters,  Eq.~(\ref{eq:commuting_cost_suburban}) states that the commuting cost consists of the cost $\alpha \tau x$ of the suburban free-flow travel time,  which depends only on the residential location $x$, and the bathtub cost $C_s^b(t)$ owing to the travel time in the downtown area and the schedule delay cost, which depends only on the arrival time $t$ (i.e., $C_s^b(t)=\alpha T(t) + s(t)$). Thus, these commuters choose their arrival time such that the bathtub cost $C_s^b(t)$ is minimized. That is, no suburban commuter residing at $x$ can reduce their bathtub cost by changing their departure time at the short-run equilibrium. The equilibrium conditions are 
\begin{subequations}\label{eq:short_run_equilibrium_condtion}
\begin{align} 
&\begin{cases}
C_s^b(t) = C_s^{b*} & \text{if   } n(t) > 0 \\
C_s^b(t) \geq C_s^{b*} & \text{if   } n(t) = 0 \\
\end{cases}  \qquad  \forall t \in \mathbb{R},  \label{eq:UE_con_car}\\ 
&\int_{t\in\mathbb{R} } \frac{n(t)v(t)}{L} dt = N_s,  \label{eq:N_s} 
\end{align}
\end{subequations}
where $C_s^{b*}$ is the short-run equilibrium bathtub cost of the suburban commuters and $N_s$ is the suburban population. 

Condition (\ref{eq:UE_con_car}) states that if the bathtub cost at time $t$ is greater than the equilibrium bathtub cost, then no one will arrive at their destination at time $t$. Condition (\ref{eq:N_s}) is the conservation law for traffic demand in the suburban area. From these conditions, the congestion dynamics and the short-run equilibrium bathtub cost ($n(t)$ and $C_s^{b*}$, respectively) are endogenously determined at the short-run equilibrium. The short-run equilibrium cost $C_s^{*}(x)$ is expressed as 
 expressed as
\begin{align}
C_s^{*}(x) =    C_s^{b*} + \alpha \tau x.
\end{align}

\subsubsection{Long-run equilibrium conditions}
Each commuter chooses a residential location (the downtown area or a location $x$ in the suburban area) to maximize their indirect utility  in the long run. Note that each commuter's residential location is determined based on the short-run equilibrium state (i.e., commuting cost). Therefore, 
as the short-run equilibrium cost $C_d^*$ of the downtown commuters is given by Eq.~(\ref{eq:short-run_cost_downtown}), the income net of downtown commuting cost is 
\begin{align} \label{eq:income_net_downtown}
& y_d  = w - \alpha T_d.
\end{align}
Similarly, as the short-run equilibrium bathtub cost depends on the total number of suburban commuters, the income net of suburban commuting cost at $x$ is expressed as 
\begin{align}
    y_s(x) = w - C_s^{b*} (N_s) - \alpha  \tau x. 
\end{align}

The equilibrium conditions are 
\begin{subequations} \label{eq:long_run_condition}
\begin{align}
&\begin{cases}
U_d(y_d, r_d+r_A)  = U^* & \text{if   } N_d > 0 \\
U_d(y_d, r_d+r_A)  \leq  U^* & \text{if   } N_d = 0 \\
\end{cases},  \label{eq:long1} \\ 
&\begin{cases}
U_s(y_s(x), r_s(x)+r_A)  = U^* \text{ if } N_s(x) > 0 \\
U_s(y_s(x), r_s(x)+r_A)  \leq  U^*  \text{ if } N_s(x) = 0 \\
\end{cases} \label{eq:long2}   \forall x \in \mathbb{R}_+,  \\ 
&\begin{cases}
a_d(y_d, r_d+r_A) N_d  = A_d & \text{if   } r_d > 0 \\
a_d(y_d, r_d+r_A) N_d   \leq A_d & \text{if   } r_d = 0 \\
\end{cases},  \label{eq:long3} \\ 
&\begin{cases}
a_s(y_s(x), r_s(x)+r_A) N_s(x)  = A_s(x) & \text{if   } r_s(x) > 0 \\
a_s(y_s(x), r_s(x)+r_A) N_s(x)   \leq A_s(x) & \text{if   } r_s(x) = 0 \\
\end{cases}\label{eq:long4}   \forall x \in \mathbb{R}_+,  \\ 
& N_d + \int_{x \in  \mathbb{R}_+} N_s(x) dx = N,    \label{eq:N} 
\end{align}
\end{subequations}
where $U^*$ is the long-run equilibrium utility level and $a_d(y_d, r_d+r_A)$ and $a_s(y_s(x), r_s(x)+r_A)$ denote the lot sizes of downtown commuters and suburban commuters who reside at location $x$, respectively.  These lot sizes are given by 
\begin{subequations}
\begin{align} \label{eq:lot_size_equilibrium}
&a_d(y_d, r_d+r_A) =  \frac{\mu y_d}{r_d + r_A}, \\
&a_s(y_s(x), r_s(x)+r_A) = \frac{\mu y_s(x)}{r_s(x) + r_A}.
\end{align}
\end{subequations}

Conditions (\ref{eq:long1}) and  (\ref{eq:long2}) are the equilibrium conditions for the downtown and suburban residential location choices. respectively. These conditions state that no commuter has the incentive to change residential locations unilaterally. Conditions (\ref{eq:long3}) and (\ref{eq:long4}) are the land market clearing conditions, which states that if the total land demand for housing equals the land size, then land rent is (weakly) larger than the agricultural rent $r_A$. Condition (\ref{eq:N}) shows the population constraint. 

\subsection{Equilibrium properties}
\subsubsection{Short-run equilibrium properties}
As the short-run equilibrium condition (\ref{eq:short_run_equilibrium_condtion}) coincides with those in the bathtub model of \cite{small2003hypercongestion}, we have the following properties. 
\begin{lemma} \label{lemma:short_run_without}
The short-run equilibrium has the following properties 
\begin{itemize}
    \item The short-run equilibrium bathtub cost satisfies 
\begin{align} \label{eq:equilibrium_bathtub_cost}
F(\theta) \equiv N_s -  \alpha n_j \left( \frac{1}{\beta} + \frac{1}{\gamma} \right) \left( \ln \theta + \frac{1}{\theta} - 1 \right) = 0,
\end{align} 
where  $\theta \equiv \frac{C_s^{b*} v_f}{\alpha L }$.  
\item  Hypercongestion exists if $\theta>2$. 
\end{itemize}
\end{lemma}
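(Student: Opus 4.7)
The plan is to solve the equilibrium condition $C_s^b(t) = C_s^{b*}$ pointwise to obtain $v(t)$ and $n(t)$ in closed form, and then use the conservation law (\ref{eq:N_s}) to derive the stated equation for $\theta$.

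First I would use (\ref{eq:commuting_cost_suburban}) and (\ref{eq:schedule_delay}) to write $\alpha T(t) + s(t) = C_s^{b*}$ on the support of $n$. For early arrivals ($t \le t^*$) this yields $T(t) = L/v(t) = (C_s^{b*} - \beta(t^* - t))/\alpha$, and symmetrically with $\gamma$ for late arrivals. Combined with the Greenshields relation (\ref{eq:speed}), which gives $n(t) = n_j(1 - v(t)/v_f)$, one obtains $n(t)$ in closed form; in particular the rush-hour endpoints $t_s < t^* < t_e$ where $n = 0$ satisfy $t^* - t_s = (C_s^{b*} - \alpha L/v_f)/\beta$ and $t_e - t^* = (C_s^{b*} - \alpha L/v_f)/\gamma$.

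Next, using the NEF (\ref{eq:NEF}), the outflow rewrites as $G(t) = n_j[\,1/T(t) - L/(v_f T(t)^2)\,]$. Changing variables from $t$ to $T$ in $N_s = \int_{t_s}^{t_e} G(t)\,dt$, with $dT/dt = \beta/\alpha$ on $[t_s, t^*]$ and $dT/dt = -\gamma/\alpha$ on $[t^*, t_e]$, both pieces share the common antiderivative $\ln T + L/(v_f T)$ evaluated between $T = L/v_f$ and $T = C_s^{b*}/\alpha$. Setting $\theta \equiv C_s^{b*} v_f / (\alpha L)$, the two contributions combine to $N_s = \alpha n_j\,(1/\beta + 1/\gamma)(\ln\theta + 1/\theta - 1)$, which is precisely $F(\theta) = 0$. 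For the second bullet, the maximum of $n(t)$ is attained at $t = t^*$: the bathtub cost is constant on the support of $n$ while $s(t)$ vanishes at $t^*$ and is strictly positive elsewhere, so $T(t)$ (hence $n(t)$) peaks there. Substituting $T(t^*) = C_s^{b*}/\alpha$ gives $n(t^*) = n_j(1 - 1/\theta)$, and since the Greenshields NEF is maximized at critical accumulation $n_j/2$, hypercongestion is equivalent to $n(t^*) > n_j/2$, i.e.\ $\theta > 2$.

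The main obstacle, though more bookkeeping than depth, is the change-of-variables step: one must verify that $t \mapsto T(t)$ is monotone on each side of $t^*$, that the Jacobians $\alpha/\beta$ and $\alpha/\gamma$ are handled with the correct sign, and that the integration limits match the rush-hour endpoints, so that the two asymmetric branches collapse into the symmetric prefactor $(1/\beta + 1/\gamma)$ appearing in $F(\theta)$. A small subtlety worth flagging is uniqueness of the support of $n$, namely that the equilibrium condition (\ref{eq:UE_con_car}) forces a single rush-hour interval; this follows because $C_s^b(t)$ is strictly convex in $T$ along each branch, ruling out re-entries after $n$ hits zero.
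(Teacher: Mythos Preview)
Your argument is correct and follows essentially the same route as the paper's Appendix~A: both determine $T(t)$ (equivalently $n(t)$) from the equilibrium condition on each side of $t^*$, then invoke the conservation law~(\ref{eq:N_s}) to obtain $F(\theta)=0$, and identify the hypercongestion threshold via $n(t^*)=n_j(1-1/\theta)$. The only cosmetic difference is that the paper differentiates the equilibrium condition and solves the resulting ODE for $n(t)$ before integrating in $t$, whereas you read off $T(t)$ algebraically and change variables to $T$ in the outflow integral; your execution is slightly more direct but the content is the same.
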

\begin{proof}
    See Appendix A. 
\end{proof}    

As cases without hypercongestion are out of our interest, we impose the assumption that $\theta > 2$ from Lemma~\ref{lemma:short_run_without}. All variables except the equilibrium bathtub cost $C_s^{b*}$ are exogenous because the total number of suburban commuters $N_s$ is given in the short-run. As the function of $F(\theta)$ is strictly monotone with respect to $\theta$ when $\theta > 2$, the equilibrium bathtub cost $C_s^{b*}$ is uniquely determined.

The replacement of all human driven vehicles with autonomous vehicles influences the equilibrium bathtub cost.  From Eq.~(\ref{eq:equilibrium_bathtub_cost}), we obtain the following equation:
\begin{align} \label{eq:comparative_1}
\frac{ {\rm d} C_s^{b*}}{{\rm d} n_j} = - \left(\frac{\partial \theta}{\partial C_s^{b*} } \right)^{-1} \left( \frac{\partial F}{\partial\theta} \right)^{-1} \left( \frac{\partial F}{\partial n_j} \right) < 0.
\end{align}
As $\left(\frac{\partial \theta}{\partial C_s^{b*} } \right)^{-1}$ is positive and both of $ \left( \frac{\partial F}{\partial\theta} \right)^{-1}$ and $\left( \frac{\partial F}{\partial n_j} \right)$ are negative, the network capacity effect (i.e., the increase in $n_j$) always decreases the equilibrium bathtub cost. This is because the network can process more vehicles due to the capacity expansion, so more commuters arrive at their destinations near their desired arrival times.

We also obtain the following equation from Eq.~(\ref{eq:equilibrium_bathtub_cost}), which indicates 
that {\it the VOT effect (i.e., the reduction in $\alpha$) may increase or decrease the equilibrium bathtub cost:} 
\begin{align} \label{eq:comparative_2_takayama}
    &\frac{{\rm d} C_s^{b*}}{{\rm d}\alpha} = - \left(\frac{\partial \theta}{\partial C_s^{b*} } \right)^{-1} \left( \frac{\partial F}{\partial\theta} \right)^{-1} \left( \frac{\partial F}{\partial \alpha} \right) \gtrless 0 \quad \text{if} \quad  \frac{\partial F}{\partial \alpha}  \lessgtr 0,
    \\
    &
  \frac{\partial F}{\partial \alpha} = \frac{\partial f(\theta, \alpha)}{\partial \alpha} + \frac{\partial f(\theta, \alpha)}{\partial \theta}\frac{\partial \theta}{\partial \alpha},
  \label{eq:df_da_takayama}
\end{align}
where $f(\theta, \alpha)=-  \alpha n_j \left( \beta^{-1} + \gamma^{-1} \right) \left( \ln \theta + \theta^{-1} - 1 \right)$
is the second term of $F(\theta)$. 
The first term on the RHS of Eq.(\ref{eq:df_da_takayama}) is negative and means that the VOT reduction simply decreases the bathtub congestion cost. The second term on the RHS of Eq.(\ref{eq:df_da_takayama}) is positive and means that a higher temporal concentration of traffic demand near the desired arrival time causes a more severe capacity drop and results in an increase in the bathtub congestion cost. Therefore, whether the equilibrium bathtub cost increases or decreases depends on the magnitudes of the network capacity and VOT effects. 

Although the network capacity effect on hypercongestion  has been extensively studies \cite[e.g.,][]{lu2020impact, moshahedi2022macroscopic}, the VOT effect on hypercongestion has been insufficiently explored. Most of the existing works investigating the network capacity effect consider autonomous vehicles to be beneficial \cite[e.g.,][]{fagnant2015preparing, van2016autonomous, lamotte2017use}. The only exception is  \cite{van2016autonomous}, who  examined both the effects of network capacity and VOT in the standard bottleneck model and showed that they {\it always} decrease the equilibrium travel cost. However, this may not hold in the presence of hypercongestion due to the capacity drop caused by the VOT effect. In other words, hypercongestion can worsen and the travel cost can increase when autonomous vehicles are introduced if the VOT effect is large.

The short-run equilibrium properties are summarized as follows. 

\begin{proposition} \label{prop:short_run_property}
The short-run equilibrium has the following properties. 
\begin{itemize}
    \item The short-run equilibrium bathtub cost  is uniquely determined. 
    \item The introduction of autonomous vehicles may increase or decrease the short-run equilibrium bathtub cost in the presence of hypercongestion. 
\end{itemize}
\end{proposition}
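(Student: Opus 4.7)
The plan is to combine the monotonicity structure of $F(\theta)=0$ from Lemma~\ref{lemma:short_run_without} with the comparative statics already derived in Eqs.~(\ref{eq:comparative_1})--(\ref{eq:df_da_takayama}).

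For the uniqueness bullet, I would differentiate
\[
F(\theta) = N_s - \alpha n_j\!\left(\tfrac{1}{\beta}+\tfrac{1}{\gamma}\right)\!\left(\ln\theta + \tfrac{1}{\theta} - 1\right)
\]
to obtain
\[
F'(\theta) = -\alpha n_j\!\left(\tfrac{1}{\beta}+\tfrac{1}{\gamma}\right)\frac{\theta - 1}{\theta^{2}} < 0 \qquad \text{for every } \theta > 1.
\]
Hence $F$ is strictly decreasing and continuous on the hypercongestion branch $(2,\infty)$ singled out by the standing assumption $\theta>2$, so it admits at most one root there. The affine relation $C_s^{b*} = \alpha L\theta/v_f$ then transfers uniqueness from $\theta$ to $C_s^{b*}$.

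For the ambiguity bullet, I would model the introduction of autonomous vehicles as the simultaneous substitution $(\alpha, n_j)\mapsto(\eta\alpha,\xi n_j)$ with $\eta\in(\beta/\alpha,1]$ and $\xi\geq 1$, and decompose the total change in $C_s^{b*}$ along a path in $(\alpha,n_j)$-space using the gradient whose components are given by Eqs.~(\ref{eq:comparative_1}) and (\ref{eq:comparative_2_takayama}). The capacity contribution from $n_j$ is strictly negative by Eq.~(\ref{eq:comparative_1}), whereas Eqs.~(\ref{eq:comparative_2_takayama})--(\ref{eq:df_da_takayama}) show that the VOT contribution splits into a direct negative piece (lower travel-time cost) and a positive piece (sharper capacity drop from temporal concentration near $t^{*}$). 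To exhibit both signs of the combined change concretely, I would examine the post-AV equilibrium condition
\[
g(\theta^{*}_{\mathrm{AV}}) \;=\; \frac{N_s}{\eta\xi\,\alpha n_j(\beta^{-1}+\gamma^{-1})}, \qquad g(\theta)\equiv \ln\theta + \theta^{-1} - 1,
\]
note that the cost ratio equals $C_s^{b*,\mathrm{AV}}/C_s^{b*,0} = \eta\,\theta^{*}_{\mathrm{AV}}/\theta^{*}_0$, and show that this ratio can be driven both above and below $1$: setting $\xi=1$ and pushing $\eta$ toward $\beta/\alpha$ forces $\theta^{*}_{\mathrm{AV}}$ to grow fast enough (since $g$ increases only logarithmically) to overwhelm the factor $\eta$, while taking $\xi$ large or $\eta$ close to $1$ makes the capacity contribution dominate and delivers a ratio below $1$.

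The principal obstacle is the second bullet: the mere observation that the two partial derivatives can carry opposite signs does not by itself produce an \emph{increase} in $C_s^{b*}$, because the cost is the product $\eta\alpha L\theta/v_f$ and $\eta$ enters as a multiplicative deflator. The crux is therefore comparing the logarithmic growth of $\theta^{*}_{\mathrm{AV}}$ as $\eta\xi$ shrinks below $1$ against the linear deflation by $\eta$; handling this comparison cleanly--or, failing that, supplying an explicit numerical witness of the kind reported later in the paper--is what seals the claim. The uniqueness bullet, by contrast, reduces to the short monotonicity argument above.
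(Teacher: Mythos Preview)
Your proposal is correct and follows the paper's own route: the paper presents Proposition~\ref{prop:short_run_property} as a summary of the preceding discussion, invoking the strict monotonicity of $F$ on $\theta>2$ for uniqueness and the sign analysis of Eqs.~(\ref{eq:comparative_1})--(\ref{eq:df_da_takayama}) for the ambiguity of the AV effect, exactly as you do. Your additional scrutiny of whether the two opposing terms in Eq.~(\ref{eq:df_da_takayama}) actually realize both overall signs for $C_s^{b*}$ goes a step beyond the paper, which is content to note the indeterminacy and then exhibit both cases numerically in Section~6; your ``obstacle'' is thus not a gap in the paper's argument but extra rigor, and note that the implicit-function derivative $dC_s^{b*}/d\alpha$ already incorporates the multiplicative $\eta$-deflation you worry about, so the concern is lighter than you suggest.
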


\subsection{Long-run equilibrium properties}
We examine the properties of the urban spatial structure at the long-run equilibrium. 
\begin{lemma} \label{lemma:long_run_equilibrium_property0}
At the long-run equilibrium, 
\begin{align}\label{eq:indirect_utility_suburban}
&\frac{d \{r_s(x)+r_A\}}{dx} = - \alpha \tau \frac{ r_s(x)+r_A}{\mu y_s(x)} < 0, \\
&\frac{d \left\{ \frac{N_s(x)}{A_s(x)} \right\} }{dx} = - \alpha \tau \frac{(1-\mu) (r_s(x)+r_A)}{ \{ \mu \left(w - C_s^{b*} - \alpha \tau x\right)\}^2} < 0.
\end{align}
\end{lemma}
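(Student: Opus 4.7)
The plan is to exploit the long-run spatial-equilibrium condition (\ref{eq:long2}), which forces the indirect utility $U_s(y_s(x),r_s(x)+r_A)$ to be constant in $x$ wherever $N_s(x)>0$, together with the land-market clearing condition (\ref{eq:long4}). From the explicit form (\ref{eq:indirect_utility_suburban_commuters}), the constancy of utility in $x$ is equivalent to the constancy of $y_s(x)(r_s(x)+r_A)^{-\mu}$. I would log-differentiate this identity with respect to $x$, obtaining
\begin{equation*}
\frac{y_s'(x)}{y_s(x)} \;=\; \mu\,\frac{r_s'(x)}{r_s(x)+r_A}.
\end{equation*}
Since $y_s(x)=w-C_s^{b*}(N_s)-\alpha\tau x$ and $C_s^{b*}$ does not depend on $x$, one has $y_s'(x)=-\alpha\tau$. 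Substituting gives the first claimed identity
\begin{equation*}
\frac{d\{r_s(x)+r_A\}}{dx} \;=\; -\alpha\tau\,\frac{r_s(x)+r_A}{\mu\, y_s(x)},
\end{equation*}
and negativity follows because $r_s(x)+r_A\ge r_A>0$ and $y_s(x)>0$ (positive consumption is required at any inhabited location).

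For the density statement, I would use (\ref{eq:long4}) at locations where $r_s(x)>0$: combining with the lot-size expression in (\ref{eq:lot_size_equilibrium}) gives the pointwise identity
\begin{equation*}
\frac{N_s(x)}{A_s(x)} \;=\; \frac{1}{a_s(x)} \;=\; \frac{r_s(x)+r_A}{\mu\, y_s(x)}.
\end{equation*}
Differentiating by the quotient rule and substituting $r_s'(x)$ from the first part, together with $y_s'(x)=-\alpha\tau$, the two terms combine as
\begin{equation*}
\frac{d}{dx}\!\left\{\frac{N_s(x)}{A_s(x)}\right\} \;=\; \frac{r_s'(x)\,y_s(x)-(r_s(x)+r_A)\,y_s'(x)}{\mu\, y_s(x)^{2}} \;=\; \frac{\alpha\tau(r_s(x)+r_A)}{\mu\,y_s(x)^{2}}\!\left(1-\frac{1}{\mu}\right),
\end{equation*}
which simplifies, after substituting $y_s(x)=w-C_s^{b*}-\alpha\tau x$, to the expression in the statement. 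Negativity is immediate from $\mu\in(0,1)$.

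There is no genuine obstacle here; the argument is essentially the Alonso-Muth rent-gradient calculation adapted to the present setting, and the only things one has to be careful about are (i) using the correct expression $y_s'(x)=-\alpha\tau$ (the schedule delay and in-downtown travel time are absorbed into the constant $C_s^{b*}$, which is $x$-independent because trip length in the downtown area is the same $L$ for every suburban commuter, as assumed after (\ref{eq:TT_assumption})), and (ii) restricting attention to locations where $N_s(x)>0$ and $r_s(x)>0$, so that the equality cases of (\ref{eq:long2}) and (\ref{eq:long4}) hold and the formulas are well-defined. The mildest subtlety is verifying that the derived $r_s'(x)$ is consistent with the boundary of the inhabited region; but since the lemma only asserts the derivative formula at interior inhabited locations, this is not needed for the proof.
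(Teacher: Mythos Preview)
Your proposal is correct and follows essentially the same route as the paper's own proof in Appendix~B: the paper also differentiates the spatial-equilibrium condition $U_s(x)=\text{const}$ to obtain the rent gradient, then combines land-market clearing (\ref{eq:long4}) with the lot-size expression (\ref{eq:lot_size_equilibrium}) to write $N_s(x)/A_s(x)=(r_s(x)+r_A)/(\mu y_s(x))$ and differentiates. Your explicit log-differentiation and quotient-rule bookkeeping just spell out the one-line steps the paper leaves implicit.
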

\begin{proof}
    See Appendix B. 
\end{proof}

Lemma \ref{lemma:long_run_equilibrium_property0} indicates that the land rent and population density in the suburban area decrease with an increase in location $x$. These properties are consistent with those in the literature \cite[e.g.,][]{fujita1989urban}. Then, the long-run equilibrium properties are as follows. 

\begin{lemma} \label{lemma:long_run_equilibrium_property}
At the long-run equilibrium,
\begin{itemize}
    \item $N_d^*$ and $N_s^*(x)$ are given by 
     \begin{align} 
            &N_d^* =\frac{1}{\mu} \{y_d\}^{\frac{1-\mu}{\mu}} \left( w - C_s^{b*} \right)^{-\frac{1}{\mu}} \left( r_s(0) + r_A \right)  A_d, \label{eq:long_run_equilibrium_downtown_commuters}\\
            &N_s^*(x) = \frac{1}{\mu} \{w - C_s^{b*} - \alpha \tau x \}^{\frac{1-\mu}{\mu}} \{  y_d \}^{-\frac{1}{\mu}} \left( r_s(0) + r_A \right)  A_s(x).\label{eq:long_run_equilibrium_suburban_commuters}
        \end{align}
    \item The city boundary $x_f$ is given by 
     \begin{align}\label{eq:long_run_equilibrium_city_boundary}
            x_f = \frac{w-C_s^{b*}}{\alpha \tau} \left( \{ r_A \}^{-\mu} - \{ r_s(0) + r_A \}^{-\mu} \right)\{ r_A \}^{\mu}.
        \end{align}
    \item  $r_s(0)$ is determined from $\int_0^{x_f} N_s^*(x)dx=N - N_d^*$. 
\end{itemize}   
\end{lemma}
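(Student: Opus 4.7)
My plan is a direct construction that chains the rent gradient of Lemma~2 with the land-market clearing conditions and the Cobb--Douglas indirect utilities. The idea is to integrate the ODE of Lemma~2 to obtain an explicit suburban rent profile, use that profile to pin down both the city boundary $x_f$ and the suburban density $N_s^*(x)$, exploit the long-run utility equalization to convert the downtown rent $r_d$ into a function of $r_s(0)$, and finally observe that the overall population constraint (\ref{eq:N}) collapses to a single scalar equation in $r_s(0)$.

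First I would separate variables in Lemma~2 and, using $y_s(x)=w-C_s^{b*}-\alpha\tau x$, integrate from $0$ to $x$ to obtain
\begin{equation*}
r_s(x)+r_A = \bigl(r_s(0)+r_A\bigr)\left(\frac{w-C_s^{b*}-\alpha\tau x}{w-C_s^{b*}}\right)^{1/\mu}.
\end{equation*}
The city boundary $x_f$ is then where the bid rent hits the agricultural rent, $r_s(x_f)=0$. Substituting this condition into the profile gives $(r_A/(r_s(0)+r_A))^\mu = 1 - \alpha\tau x_f/(w-C_s^{b*})$; solving for $x_f$ and factoring out $r_A^\mu$ produces exactly Eq.~(\ref{eq:long_run_equilibrium_city_boundary}). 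Next, the land-market clearing conditions (\ref{eq:long3})--(\ref{eq:long4}), together with the equilibrium lot sizes (\ref{eq:lot_size_equilibrium}), immediately give $N_d^* = (r_d+r_A)A_d/(\mu y_d)$ and $N_s^*(x) = (r_s(x)+r_A) A_s(x)/(\mu y_s(x))$. Plugging the rent profile into the suburban formula yields Eq.~(\ref{eq:long_run_equilibrium_suburban_commuters}), and for the downtown I would equate the indirect utilities $U_d = U_s(0) = U^*$ through (\ref{eq:indirect_utility_downtown}) and (\ref{eq:indirect_utility_suburban_commuters}) to get $r_d+r_A = (r_s(0)+r_A)(y_d/(w-C_s^{b*}))^{1/\mu}$, which upon substitution gives Eq.~(\ref{eq:long_run_equilibrium_downtown_commuters}).

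With $x_f$, $N_d^*$, and $N_s^*(x)$ all expressed as explicit functions of the single unknown $r_s(0)$, the population constraint (\ref{eq:N}) reduces to $\int_0^{x_f} N_s^*(x)\, dx = N - N_d^*$, which is the statement of the last bullet. The main conceptual step is the utility-equalization substitution that eliminates $r_d$ in favour of $r_s(0)$, since without it the downtown and suburban populations appear to live in separate systems. The main bookkeeping obstacle is the rearrangement of the $x_f$ expression into the stated symmetric form involving $(r_A^{-\mu}-(r_s(0)+r_A)^{-\mu})\,r_A^\mu$, which requires careful tracking of the powers of $r_A$; otherwise the whole calculation is a chain of direct substitutions and the hard part is purely algebraic rather than conceptual.
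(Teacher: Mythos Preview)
Your proposal is correct and follows essentially the same approach as the paper's Appendix~C: both derive the suburban rent profile from the constancy of indirect utility (you by integrating the ODE of Lemma~\ref{lemma:long_run_equilibrium_property0}, the paper by writing $U_d^*=U_s^*(x)$ directly, which are equivalent), then combine it with the land-market clearing conditions and the lot-size formulas to obtain $N_d^*$ and $N_s^*(x)$, use the boundary condition $r_s(x_f)=0$ (equivalently $U_s^*(0)=U_s^*(x_f)$) for $x_f$, and close with the population constraint. The only difference is presentational order; no conceptual step differs.
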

\begin{proof}
    See Appendix C. 
\end{proof}

Eq.~(\ref{eq:long_run_equilibrium_downtown_commuters}) states that the downtown population decreases with a decrease in the short-run equilibrium bathtub cost.  This occurs because a lower commuting cost encourages more commuters to choose the suburban area as their residential location. 
Combining Lemma~\ref{lemma:long_run_equilibrium_property} with Proposition~\ref{prop:short_run_property} suggests that the introduction of autonomous vehicles does not necessarily result in an increase in the suburban population (i.e., suburbanization). That is, {\it the downtown population may increase due to the VOT effect, and thus the total car traffic demand decreases in the long-run.}

It should be noted that the VOT effect of autonomous vehicles decreases the cost of free-flow travel time in the suburban area, causing suburban commuters to reside farther from the downtown. This leads to a decrease in the population density in the suburban area and the spatial expansion of the city, as indicated by Eqs.~(\ref{eq:long_run_equilibrium_suburban_commuters}) and (\ref{eq:long_run_equilibrium_city_boundary}). This is consistent with the standard results obtained in the literature. Therefore, {\it  even if the introduction of autonomous vehicles leads to a decrease in the suburban population, the city can spatially expand outward. } In Section 6, we will demonstrate that such a case actually occurs. 

The results are summarized as follows. 
\begin{proposition}
\ 
\begin{itemize}
    \item The introduction of autonomous vehicles (i.e., the VOT and network capacity effects) may increase the short-run equilibrium bathtub cost, resulting in a decrease in the suburban population (i.e., total car traffic demand) in the long-run.  
    \item Even if the introduction of autonomous vehicles decreases the suburban population,  the city may spatially expand outward.
\end{itemize}    
\end{proposition}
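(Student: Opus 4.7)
Plan of proof. Both claims are existence (``may'') statements, so the strategy is to isolate the mechanism in closed form from the equilibrium equations already established in Proposition~\ref{prop:short_run_property} and Lemma~\ref{lemma:long_run_equilibrium_property}, and then to certify that the relevant parameter regime is non-empty (with a concrete instantiation deferred to Section~6).

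For the first bullet, I would start from Proposition~\ref{prop:short_run_property}: the short-run equilibrium bathtub cost $C_s^{b*}$ can strictly increase under the joint shift $(\alpha, n_j) \mapsto (\eta\alpha, \xi n_j)$ whenever the positive capacity-drop component in \eqref{eq:df_da_takayama} dominates the direct VOT reduction. Taking such a parameter point, I would transport the effect into the long run via the formulas of Lemma~\ref{lemma:long_run_equilibrium_property}. Since $y_d = w - \eta\alpha T_d$ changes only through the VOT factor while $y_s(x) = w - C_s^{b*} - \eta\alpha\tau x$ absorbs the additional bathtub-cost increase, the indirect-utility equality at the downtown/suburban location margin tilts toward downtown residence. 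Concretely, I would eliminate $r_s(0)$ by combining \eqref{eq:long_run_equilibrium_city_boundary} with the integral condition $\int_0^{x_f} N_s^*(x)\,dx = N - N_d^*$, reducing the long-run system to a single scalar equation in $N_d^*$; showing $\partial N_d^*/\partial C_s^{b*} > 0$ at a feasible point then yields the decrease in suburban population whenever $C_s^{b*}$ rises.

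For the second bullet, the central object is the boundary formula
\begin{equation*}
x_f = \frac{w - C_s^{b*}}{\eta\alpha\tau}\Bigl(\{r_A\}^{-\mu} - \{r_s(0) + r_A\}^{-\mu}\Bigr)\{r_A\}^{\mu}.
\end{equation*}
The prefactor $1/(\eta\alpha\tau)$ is inflated by the factor $1/\eta > 1$ coming from the VOT effect. I would argue that, as long as the bracketed land-rent term stays bounded away from zero, this direct magnification of the denominator outstrips the combined effects of the fall in $w - C_s^{b*}$ and of the shift in $r_s(0)$ induced by the reallocation of residents toward the downtown. The economic mechanism is transparent: even if fewer commuters live in the suburbs, each of them consumes more land because free-flow commuting is cheaper per unit of distance, so the bid-rent curve flattens and the residential footprint expands outward.

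The main obstacle I anticipate is not the existence of the regime itself but producing a clean, closed-form sufficient condition on $(\eta,\xi)$ and the primitives $(\alpha,\beta,\gamma,n_j,v_f,\mu,A_d,A_s(\cdot),N)$ under which both comparative statics have definite signs simultaneously. Because $r_s(0)$ is pinned down implicitly by a nonlinear integral equation coupled with \eqref{eq:equilibrium_bathtub_cost}, an analytic sufficient criterion is unlikely to be compact. I would therefore close the argument by invoking the numerical instantiation of Section~6, which provides a single parameter setting in which $C_s^{b*}$ rises, $N_s$ falls, and $x_f$ still increases, thereby certifying both existence claims of the proposition.
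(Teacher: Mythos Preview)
Your proposal is correct and follows essentially the same route as the paper: the proposition is presented there not with a standalone proof but as a summary of the discussion immediately preceding it, which combines Proposition~\ref{prop:short_run_property} with the closed-form expressions of Lemma~\ref{lemma:long_run_equilibrium_property} (Eqs.~\eqref{eq:long_run_equilibrium_downtown_commuters}--\eqref{eq:long_run_equilibrium_city_boundary}) and then defers the concrete existence verification to the numerical examples of Section~6. Your treatment is, if anything, slightly more careful than the paper's---you flag that $r_s(0)$ is endogenous and that the comparative static on $N_d^*$ must be closed through the implicit system, whereas the paper reads the sign of $\partial N_d^*/\partial C_s^{b*}$ directly off \eqref{eq:long_run_equilibrium_downtown_commuters}---but the overall architecture is the same.
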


\section{Perimeter Control}
\subsection{Perimeter control scheme}
During perimeter control, the inflow rate to the downtown area is restricted to maintain the maximum throughput in the downtown area. A queue will develop outside the perimeter boundary if the arrival rate at the boundary exceeds the restricted inflow rate to the downtown area. 

We  derive the critical accumulation $n_{cr}$, where the throughput is maximized from Eq.~(\ref{eq:speed}). 
\begin{align} \label{eq:critical_accumulation}
    n_{cr} = \frac{n_j}{2}.
\end{align}
The aim of perimeter control is to restrict the inflow at the perimeter boundary so as not to exceed the critical accumulation in the downtown area. Therefore, the inflow rate during perimeter control $I (t)$ is determined by 
\begin{align}
    I(t) = 
    \begin{cases}
        I_p & \qquad  \text{if } n(t) = n_{cr}  \\
        A_b(t) & \qquad  \text{if } n(t) < n_{cr}
    \end{cases},
\end{align}
where $I_p$ is the inflow rate during perimeter control and $A_b(t)$ is the arrival rate at the perimeter boundary at time $t$. If the accumulation is below the critical level, then  no restriction is implemented; all vehicles at the boundary can enter the downtown area. Once accumulation reaches the critical accumulation, the inflow rate is restricted to $I_p$.  The throughput can be maximized during perimeter control if the inflow rate $I(t)$ is set to the NEF at the critical accumulation. Thus, from Eqs.~(\ref{eq:speed})-(\ref{eq:NEF}) and (\ref{eq:critical_accumulation}), we have 
\begin{align}
    I_p = \frac{n_jv_f}{4L}.
\end{align}

\subsection{Queuing dynamics at perimeter boundaries}
A queue will develop outside the perimeter boundary if the arrival rate at the perimeter boundary exceeds the inflow rate $I_p$. We model the queuing dynamics as a point queue and assume a first-arrived-first-in property. Therefore, the waiting time of a commuter who arrives at their destination at time $t$, $T_w(t)$, is 
\begin{align} \label{eq:waiting_time_boundary}
    T_w(t) = \frac{q(t)}{I_p},
\end{align}
where $q(t)$ is the number of vehicles queued at the perimeter boundary when a commuter who arrives at their destination at time $t$ reaches the boundary.

\subsection{Schedule preferences of suburban commuters}
The schedule preferences of the downtown commuters are the same as those without perimeter control in Eq.~(\ref{eq:travel_cost}). Given the queue dynamics during perimeter control, suburban commuting cost $C_s^p(x,t)$ of a commuter who resides at $x$ and arrives at work at time $t$ is
\begin{subequations}\label{eq:gc_perimeter}
\begin{align}
&C_s^p(x, t) =  \alpha \left( T^p(t) + \tau x \right)  + s(t) ,\label{eq:commuting_cost_perimeter}  \\
& T^p(t)  =
\begin{cases}
T(t)   & \qquad  \text{if } t \leq t_s^p \\
\frac{L}{v_f/2} + T_w(t) & \qquad  \text{if } t_s^p < t \leq t_e^p \\
T(t)   & \qquad  \text{if }  t_e^p < t \\
\end{cases}, \\
& s(t) = 
\begin{cases}
\beta \left( t^* - t \right) & \qquad  \text{if } t \leq t^*  \\
\gamma \left( t -  t^* \right) & \qquad \text{if }t > t^* 
\end{cases}, \label{eq:schedule_delay_perimeter}
\end{align}
\end{subequations}
where $T^p(t)$ is the downtown travel time under perimeter control, which includes the travel time in the downtown area and the waiting time at the perimeter boundary.
$t_s^p$ and $t_e^p$ are the start and end times of perimeter control, respectively. The difference between the schedule preferences at the equilibrium without and with perimeter control appears in the downtown travel time under perimeter control $T^p(t)$. The downtown travel time before and after perimeter control implementation ($t \leq t_s^p$ and $ t_e^p < t $, respectively) are the same as Eq.~(\ref{eq:travel_cost}). During perimeter control, the travel time in the downtown area is $\frac{L}{v_f/2}$ because the NEF is maintained at the maximum. Furthermore, waiting time at the perimeter boundary (Eq.~(\ref{eq:waiting_time_boundary})) is  incurred in addition to the travel time in the downtown area. 

\section{Equilibrium under Perimeter Control}
\subsection{Equilibrium conditions}
\subsubsection{Short-run equilibrium}
The short-run equilibrium conditions for the downtown commuters are the same as those without perimeter control (Section \ref{section:short-run_equilibrium}). Therefore,
\begin{align}
& C_d^{p*}  =  \alpha T_d, 
\end{align}
where $C_d^{p*} $ is  the short-run equilibrium cost of the downtown commuters. 

As at the short-run equilibrium without perimeter control, the suburban commuters choose their arrival times such that  the bathtub cost ($C_s^{bp}(t)= \alpha T^p(t) + s(t)$) is minimized. Therefore, the equilibrium conditions are 
\begin{subequations}\label{eq:short_run_equilibrium_condtion_perimeter}
\begin{align} 
&\begin{cases}
C_s^{bp}(t) = C_s^{bp*} & \text{if   } n(t) > 0 \\
C_s^{bp}(t) \geq C_s^{bp*} & \text{if   } n(t) = 0 \\
\end{cases}  \qquad  \forall t \in \mathbb{R},  \label{eq:UE_con_car_perimeter}\\ 
&\begin{cases}
n(t) = \frac{n_j}{2} & \text{if   } q(t) > 0 \\
n(t) \leq \frac{n_j}{2}  & \text{if   } q(t) = 0 \\
\end{cases}  \qquad  \forall t \in \mathbb{R},  \label{eq:boundary_condition_perimeter}\\ 
& \int_{t\in\mathbb{R} } \frac{n(t)v(t)}{L} dt = N_s,  \label{eq:N_s_perimeter} 
\end{align}
\end{subequations}
where $ C_s^{bp*}$ is the short-run equilibrium bathtub cost during perimeter control and $N_s$ is the number of suburban commuters under perimeter control.

Conditions (\ref{eq:UE_con_car_perimeter}) and (\ref{eq:N_s_perimeter}) are the same as Conditions  (\ref{eq:UE_con_car}) and (\ref{eq:N_s}), respectively. Condition (\ref{eq:boundary_condition_perimeter}) reflects the restriction of the inflow to the downtown area during perimeter control; accumulation is at the critical level if there is a queue at the perimeter boundary. Otherwise, accumulation is lower than the critical level. Then, the short-run equilibrium cost $C_s^{p*}(x)$ is 
\begin{align}
C_s^{p*}(x) =    C_s^{bp*} + \alpha \tau x. 
\end{align}

\subsubsection{Long-run equilibrium}
In the long-run, the difference between the cases with and without perimeter control appears only in the income net of suburban commuting cost. 
Specifically, under the perimeter control, the income net of suburban commuting cost  at location $x$ is 
\begin{align}
     y^p_s(x) = w - C_s^{p*}(x). 
\label{eq:income_net_suburb_pc}
\end{align}
The long-run equilibrium conditions are thus represented as (\ref{eq:long_run_condition}) with the use of (\ref {eq:income_net_suburb_pc}).

\subsection{Equilibrium properties}
\subsubsection{Short-run equilibrium}
Similar to the model of \cite{dantsuji2022perimeter}, we have the following properties. 
\begin{lemma} \label{lemma:short-run_perimeter}
    The short-run equilibrium under perimeter control has the following properties
    \begin{itemize}
        \item The short-run equilibrium bathtub cost satisfies 
        \begin{align} \label{eq:equilibrium_bathtub_cost_perimeter}
        F^p(\theta^p)  \equiv N_s - \alpha n_j \left( \frac{1}{\beta} + \frac{1}{\gamma} \right) \left( \frac{\theta^p}{4} + \ln 2 - 1 \right)  = 0,
        \end{align}
        where $\theta^p \equiv \frac{C_s^{pb*} v_f}{\alpha L }$.  
        \item A queue develops at the perimeter boundary, and its length increases toward the desired arrival time. 
    \end{itemize}
\end{lemma}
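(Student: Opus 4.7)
The plan is to decompose the active arrival window $[t_q, t_q']$ into three consecutive phases according to the controller's activity---an uncontrolled build-up on $[t_q, t_s^p]$, a controlled plateau on $[t_s^p, t_e^p]$ during which $n(t) \equiv n_j/2$, and an uncontrolled dissipation on $[t_e^p, t_q']$---compute the mass of arrivals generated in each, and enforce the conservation law (\ref{eq:N_s_perimeter}). The equilibrium condition (\ref{eq:UE_con_car_perimeter}) forces the bathtub cost $C_s^{bp}(t)$ to equal the constant $C_s^{bp*}$ throughout, which supplies the algebraic relation needed to close each sub-problem.

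For the uncontrolled build-up phase, I would use $C_s^{bp}(t) = C_s^{bp*}$ together with $s(t) = \beta(t^* - t)$ to solve explicitly for $T(t) = (C_s^{bp*} - \beta(t^* - t))/\alpha$, and then recover $v(t) = L/T(t)$ and $n(t) = n_j(1 - L/(v_f T(t)))$ from (\ref{eq:speed}). Changing the variable of integration via $\mathrm{d}t = (\alpha/\beta)\,\mathrm{d}T$ transforms $\int G(t)\,\mathrm{d}t$ with $G = nv/L$ into the elementary integral $(\alpha n_j/\beta)\int_{L/v_f}^{2L/v_f}\bigl(1/T - L/(v_f T^2)\bigr)\,\mathrm{d}T$, which evaluates in closed form to $(\alpha n_j/\beta)(\ln 2 - 1/2)$. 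The dissipation phase is symmetric in $(\beta, \gamma)$ and contributes $(\alpha n_j/\gamma)(\ln 2 - 1/2)$.

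During the controlled plateau I have $G(t) \equiv I_p = n_j v_f/(4L)$, so the number of arrivals in this phase is simply $I_p\,(t_e^p - t_s^p)$. The endpoints $t_s^p$ and $t_e^p$ are pinned down by the boundary conditions $q(t_s^p) = q(t_e^p) = 0$ applied to the equilibrium identity $C_s^{bp*} = 2\alpha L/v_f + \alpha q(t)/I_p + s(t)$, yielding $t^* - t_s^p = (C_s^{bp*} - 2\alpha L/v_f)/\beta$ and $t_e^p - t^* = (C_s^{bp*} - 2\alpha L/v_f)/\gamma$. Substituting $\theta^p = C_s^{bp*} v_f/(\alpha L)$ shows this phase contributes $\alpha n_j(1/\beta + 1/\gamma)(\theta^p/4 - 1/2)$. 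Adding the three contributions, setting the sum equal to $N_s$, and collecting terms reproduces exactly (\ref{eq:equilibrium_bathtub_cost_perimeter}). For the second claim I would differentiate the same equilibrium identity on each half of the plateau: on $(t_s^p, t^*)$ this gives $\dot q(t) = \beta I_p/\alpha > 0$, and on $(t^*, t_e^p)$ it gives $\dot q(t) = -\gamma I_p/\alpha < 0$, so the queue strictly builds up until the desired arrival time and strictly drains thereafter.

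The main obstacle is justifying the three-phase structure a priori rather than simply assuming it. In particular, I would need to verify that the active arrival window is a single interval, that $n(t)$ rises monotonically from zero to $n_j/2$ on the build-up phase without triggering the controller prematurely through the hypercongested branch of the NEF, and that $q(t)$ remains non-negative throughout. These monotonicity and boundary-matching properties should follow from the Karush--Kuhn--Tucker style reasoning used for Lemma~\ref{lemma:short_run_without} in Appendix A, combined with the bathtub construction of \cite{small2003hypercongestion} as adapted in \cite{dantsuji2022perimeter}.
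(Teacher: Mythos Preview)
Your proposal is correct and follows essentially the same three-phase decomposition as the paper's Appendix~D: the paper likewise pins down $t_s^p$ and $t_e^p$ from the equal-cost condition with zero queue, computes the plateau contribution as $I_p(t_e^p-t_s^p)$, integrates the NEF over the uncontrolled build-up and dissipation windows to obtain $\alpha n_j(1/\beta+1/\gamma)(\ln 2-1/2)$, and differentiates the equilibrium identity for the queue dynamics. The only cosmetic difference is that the paper evaluates the uncontrolled-phase integrals by substituting the explicit accumulation profile from Appendix~A (Eq.~(\ref{eq:accumulation_derived})) rather than your change of variable $\mathrm{d}t=(\alpha/\beta)\,\mathrm{d}T$, but the computation and result are identical.
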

\begin{proof}
    See Appendix D. 
\end{proof}    

All variables except the equilibrium bathtub cost $C_s^{pb*} $ are exogenous. The equilibrium bathtub cost $C_s^{pb*}$ is uniquely determined because the function of $F^p(\theta^p)$ is strictly monotone with respect to $\theta^p$.   Eq.~(\ref{eq:equilibrium_bathtub_cost_perimeter}) is rewritten as
\begin{align} \label{eq:equilibrium_bathtub_cost_perimeter_rev}
    C_s^{pb*} =  \frac{\beta \gamma}{\beta + \gamma} \frac{N_s}{ \frac{n_jv_f}{4L} } + \frac{4 \alpha L}{v_f} \left( 1 - \ln 2 \right). 
\end{align}

When autonomous vehicles are introduced, as shown in Eq.~(\ref{eq:equilibrium_bathtub_cost_perimeter_rev}), {\it both the network capacity and VOT effects decrease the  short-run equilibrium cost of the suburban commuters under perimeter control} (i.e., ${\rm d}  C_s^{pb*} / {\rm d}  n_j <0 $ and ${\rm d}  C_s^{pb*} / {\rm d}  \alpha >0 $, respectively). The network capacity effect increases the number of vehicles traveling in the downtown area during perimeter control and the restricted inflow rate at the perimeter boundary. Consequently,  more suburban commuters arrive at their destinations near their desired arrival times. This decreases the short-run equilibrium cost of the suburban commuters.

Without perimeter control, the VOT effect may increase the short-run equilibrium cost for the suburban commuters  due to  capacity drop. With perimeter control, the VOT effect always decreases the short-run equilibrium cost because capacity drop never occurs. Even though the queue length at the perimeter boundary increases due to the VOT effect, the equilibrium cost decreases because the inflow rate is constant regardless of the queue length. This mechanism  is  the same as that in the standard bottleneck model  \citep{van2016autonomous} because this situation involves  a bottleneck with a fixed capacity (i.e., the value of NEF at critical accumulation) between the downtown and suburban areas. Note that the first term on the  RHS of Eq.~(\ref{eq:equilibrium_bathtub_cost_perimeter_rev}) represents the bottleneck cost where the bottleneck capacity is $\frac{n_jv_f}{4L}$, which is identical to the value of NEF at critical accumulation. This indicates that the impacts of autonomous vehicles in the presence of hypercongestion contradict those in the standard bottleneck model. 

Next, we compare equilibria with and without perimeter control to demonstrate the effects of perimeter control on the urban spatial structure. The only difference between the cases with and without perimeter control appears in the short-run equilibrium bathtub cost ($C_s^{b*}$ and $C_s^{bp*}$, respectively). Eqs.~(\ref{eq:equilibrium_bathtub_cost}) and (\ref{eq:equilibrium_bathtub_cost_perimeter}) show that the short-run equilibrium bathtub cost is reduced by perimeter control when hypercongestion exists (see Appendix E for the proof). The short-run equilibrium bathtub cost decreases because network capacity drop  never occurs under perimeter control. Thus, more commuters  arrive at their destinations near their desired arrival times than at user equilibrium where  capacity drop occurs. Therefore, {\it although queuing congestion occurs at the perimeter boundary, the short-run equilibrium bathtub cost decreases. }

The Properties of the short-run equilibrium with perimeter control are summarized as follows. 
\begin{proposition} \label{prop:short_run_property_under_perimeter_control}
The short-run equilibrium with perimeter control has the following properties.
    \begin{itemize}
    \item The short-run equilibrium bathtub cost with perimeter control is uniquely determined. 
    \item The introduction of autonomous vehicles decreases the short-run equilibrium bathtub cost under perimeter control. 
    \item Hypercongestion mitigation by perimeter control decreases the short-run equilibrium bathtub cost.
    \end{itemize}
\end{proposition}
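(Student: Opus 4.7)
The plan is to dispatch the three bullets of Proposition~\ref{prop:short_run_property_under_perimeter_control} in turn, relying on the explicit closed-form expressions from Lemma~\ref{lemma:short-run_perimeter} and Eq.~(\ref{eq:equilibrium_bathtub_cost_perimeter_rev}), combined with Lemma~\ref{lemma:short_run_without} for the comparison step.

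For the uniqueness claim, I would simply note that differentiating $F^p$ from Eq.~(\ref{eq:equilibrium_bathtub_cost_perimeter}) with respect to $\theta^p$ yields $\partial F^p/\partial \theta^p = -(\alpha n_j/4)(\beta^{-1}+\gamma^{-1}) < 0$, so $F^p$ is continuous and strictly decreasing in $\theta^p$. The root of $F^p(\theta^p)=0$ is therefore unique, and via the linear bijection $\theta^p = C_s^{pb*} v_f/(\alpha L)$ the equilibrium cost $C_s^{pb*}$ is uniquely determined. For the autonomous-vehicle claim, I would read off the comparative statics directly from the closed form (\ref{eq:equilibrium_bathtub_cost_perimeter_rev}). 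The network capacity effect enters only the first term, giving $\partial C_s^{pb*}/\partial n_j = -(\beta\gamma/(\beta+\gamma))\cdot 4LN_s/(n_j^2 v_f) < 0$, so an increase in $n_j$ (i.e., $n_j \to \xi n_j$ with $\xi \geq 1$) lowers the cost. The VOT effect enters only the second term, giving $\partial C_s^{pb*}/\partial \alpha = 4L(1-\ln 2)/v_f > 0$, so a reduction $\alpha \to \eta\alpha$ with $\eta \leq 1$ also lowers the cost. Both effects therefore decrease the short-run bathtub cost under perimeter control, in sharp contrast with the ambiguous no-perimeter-control case of Proposition~\ref{prop:short_run_property}.

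For the comparison of the two short-run equilibria, which I expect to be the main obstacle, my plan is to eliminate $N_s$ between Eqs.~(\ref{eq:equilibrium_bathtub_cost}) and (\ref{eq:equilibrium_bathtub_cost_perimeter}) to obtain the implicit relation
\begin{equation*}
\theta^p = 4\left(\ln\theta + \theta^{-1} - \ln 2\right).
\end{equation*}
Define $g(\theta) \equiv \theta - 4(\ln\theta + \theta^{-1} - \ln 2)$; a direct calculation gives $g(2) = 2 - 4(\ln 2 + 1/2 - \ln 2) = 0$. The key algebraic observation, and the one subtle step, is the perfect-square factorization
\begin{equation*}
g'(\theta) = 1 - \frac{4}{\theta} + \frac{4}{\theta^2} = \frac{(\theta-2)^2}{\theta^2} \geq 0,
\end{equation*}
so $g$ is strictly increasing on $(2,\infty)$ with $g(2)=0$, which yields $g(\theta) > 0$ precisely under the hypercongestion regime $\theta > 2$ imposed after Lemma~\ref{lemma:short_run_without}. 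Hence $\theta^p < \theta$, and since $\theta$ and $\theta^p$ are related to their respective costs by the same positive scaling $v_f/(\alpha L)$, we conclude $C_s^{pb*} < C_s^{b*}$. The qualitative explanation is that capacity drop never occurs under perimeter control, so the NEF is held at its maximum throughout the rush, which more than compensates for the queuing at the perimeter boundary.
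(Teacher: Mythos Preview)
Your proposal is correct and follows essentially the same route as the paper: strict monotonicity of $F^p$ for uniqueness, the closed form (\ref{eq:equilibrium_bathtub_cost_perimeter_rev}) for the autonomous-vehicle comparative statics, and the $\theta$-versus-$\theta^p$ comparison for the perimeter-control claim. Your perfect-square computation $g'(\theta)=(\theta-2)^2/\theta^2$ in fact supplies the calculus that the paper's Appendix~E leaves unstated (it simply asserts ``it always holds that $F^p(\theta)<0$''), so your argument is the more complete of the two.
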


\subsubsection{Long-run equilibrium}
We investigate the properties of the urban spatial structure at the long-run equilibrium with perimeter control. In the long-run, the difference between cases with and without perimeter control is the income net of commuting cost. Thus, we have the following properties. 
\begin{lemma} \label{lemma:long_run_equilibrium_property_perimeter}
At the long-run equilibrium under perimeter control
    \begin{itemize}
    \item $N_d^{p*}$ and $N_s^{p*}(x)$ are given by 
     \begin{align} \label{eq:long_run_equilibrium_downtown_commuters_perimeter}
            &N_d^{p*} = \frac{1}{\mu} \{y_d\}^{\frac{1-\mu}{\mu}} \left( w - C_s^{bp*} \right)^{-\frac{1}{\mu}} \left( r_s(0) + r_A \right)  A_d, \\
            &N_s^{p*}(x) = \frac{1}{\mu} \{w - C_s^{bp*} - \alpha \tau x \}^{\frac{1-\mu}{\mu}} \{  y_d \}^{-\frac{1}{\mu}} \left( r_s(0) + r_A \right)  A_s(x). \label{eq:long_run_equilibrium_suburban_commuters_perimeter}
        \end{align}
    \item The city boundary $x_f$ is given by 
     \begin{align}\label{eq:long_run_equilibrium_city_boundary_perimeter}
            x_f = \frac{w-C_s^{bp*}}{\alpha \tau} \left( \{ r_A \}^{-\mu} - \{ r_s(0) + r_A \}^{-\mu} \right)\{ r_A \}^{\mu}.
        \end{align}
    \item  $r_s(0)$ is determined from $\int_0^{x_f} N_s^{p*}(x)dx=N - N_d^{p*}$. 
\end{itemize}   
\end{lemma}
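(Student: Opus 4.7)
The plan is to exploit the fact that the long-run equilibrium conditions \eqref{eq:long_run_condition} are structurally identical in the cases with and without perimeter control; perimeter control enters only through the income net of suburban commuting cost, which under perimeter control is $y_s^p(x) = w - C_s^{bp*} - \alpha\tau x$ by \eqref{eq:income_net_suburb_pc}, with $C_s^{bp*}$ determined by Lemma~\ref{lemma:short-run_perimeter}. Hence the derivation underlying Lemma~\ref{lemma:long_run_equilibrium_property} (Appendix C) can be replayed verbatim after substituting $C_s^{b*} \mapsto C_s^{bp*}$, and the main task is to verify that each step of that derivation does not depend on the functional form of the short-run bathtub cost but only on its value.

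First I would pin down the equilibrium utility level $U^*$ by imposing the city-edge condition $r_s(x_f)=0$ in the indirect utility equality $U_s(y_s^p(x_f), r_A) = U^*$, using \eqref{eq:indirect_utility_suburban_commuters}; this yields $U^* = (1-\mu)^{1-\mu}\mu^\mu (w - C_s^{bp*} - \alpha\tau x_f)\, r_A^{-\mu}$. Second, equating $U_s(y_s^p(x), r_s(x)+r_A) = U^*$ at an arbitrary $x \in [0,x_f]$ and taking the ratio with the analogous equation evaluated at $x=0$ produces the suburban rent profile $r_s(x)+r_A$ as a function of $r_s(0)$ and $y_s^p(x)$. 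Plugging this rent profile into the suburban land market clearing condition \eqref{eq:long4} together with the lot size \eqref{eq:lot_size_equilibrium} and solving for $N_s^{p*}(x)$ then yields \eqref{eq:long_run_equilibrium_suburban_commuters_perimeter}. Similarly, combining $U_d(y_d,r_d+r_A)=U^*$ with \eqref{eq:indirect_utility_downtown} and the downtown land market clearing condition \eqref{eq:long3} produces \eqref{eq:long_run_equilibrium_downtown_commuters_perimeter}; the common prefactor $(r_s(0)+r_A)$ in both population expressions emerges from equating the two indirect utility expressions across the downtown--suburban interface.

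Third, the city boundary formula \eqref{eq:long_run_equilibrium_city_boundary_perimeter} is obtained by imposing $r_s(x_f)=0$ in the rent profile derived above and solving the resulting algebraic equation for $x_f$ in closed form. Finally, $r_s(0)$ is pinned down implicitly by the global population constraint \eqref{eq:N}: substituting the closed-form expressions \eqref{eq:long_run_equilibrium_downtown_commuters_perimeter}--\eqref{eq:long_run_equilibrium_city_boundary_perimeter} into $N_d^{p*} + \int_0^{x_f} N_s^{p*}(x)\,dx = N$ reduces the long-run problem to a single scalar equation in $r_s(0)$ once $C_s^{bp*}$ is fixed.

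The main obstacle is conceptual rather than computational: since $C_s^{bp*}$ itself depends on $N_s = N - N_d^{p*}$ via \eqref{eq:equilibrium_bathtub_cost_perimeter}, while $N_d^{p*}$ depends on $C_s^{bp*}$ via \eqref{eq:long_run_equilibrium_downtown_commuters_perimeter}, one should check that this simultaneous system is well-posed. This closes because $F^p(\theta^p)$ is strictly monotone in $\theta^p$ (Lemma~\ref{lemma:short-run_perimeter}), so $C_s^{bp*}$ is a well-defined, monotone function of $N_s$, and the downtown and suburban populations depend continuously and monotonically on $r_s(0)$ through the closed-form expressions above; a standard monotonicity argument then delivers a unique consistent solution. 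Apart from this fixed-point verification, the proof is essentially bookkeeping along the lines of Appendix C with the uniform substitution $C_s^{b*} \mapsto C_s^{bp*}$.
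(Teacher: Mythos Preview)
Your proposal is correct and matches the paper's approach: the paper provides no separate proof for this lemma, instead remarking just before its statement that ``the difference between cases with and without perimeter control is the income net of commuting cost,'' so the result follows from Lemma~\ref{lemma:long_run_equilibrium_property} (Appendix~C) by the substitution $C_s^{b*} \mapsto C_s^{bp*}$, exactly as you identify. Your additional fixed-point discussion of the simultaneous determination of $C_s^{bp*}$ and $N_s$ goes beyond what the paper spells out, but is a reasonable consistency check.
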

 Lemma 5 suggests that the urban spatial structure at the long-run equilibrium under perimeter control has the same properties as the cases without perimeter control. Specifically, Eq. (\ref{eq:long_run_equilibrium_downtown_commuters_perimeter}) indicates that the downtown population decreases with a decrease in the short-run equilibrium bathtub cost. Eqs. (\ref{eq:long_run_equilibrium_suburban_commuters_perimeter}) and (\ref{eq:long_run_equilibrium_city_boundary_perimeter}) state that the VOT effect decreases the population density in the suburban area and expands the city outward.  Therefore, as the introduction of autonomous vehicles always decreases the short-run equilibrium bathtub  cost under perimeter control (Proposition \ref{prop:short_run_property_under_perimeter_control}), {\it the downtown population decreases under perimeter control due to the introduction of autonomous vehicles.}

As the short-run equilibrium bathtub cost is reduced by perimeter control (Proposition \ref{prop:short_run_property_under_perimeter_control}), the downtown population decreases in the long-run, which can be seen from Eq.~(\ref{eq:long_run_equilibrium_downtown_commuters_perimeter}). Since the income net of the suburban commuting cost increases, more commuters select the suburban area as their residential locations. Cities are expanded by hypercongestion mitigation.  

The results are summarized as follows.
\begin{proposition}
\
\begin{itemize}
    \item The introduction of autonomous vehicles under perimeter control decreases the short-run equilibrium bathtub cost,  and  results in a decrease in the downtown population in the long-run.     
    \item Hypercongestion mitigation by perimeter control decreases the short-run equilibrium cost, and results in a decrease in the downtown population in the long-run.
\end{itemize}
\end{proposition}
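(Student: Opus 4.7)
Both bullets of the proposition have the same two-layer structure: (i) a short-run statement about $C_s^{bp*}$ that is already contained in Proposition~\ref{prop:short_run_property_under_perimeter_control}, followed by (ii) a long-run consequence for $N_d^{p*}$. The plan is therefore to prove one comparative-statics claim that handles both bullets uniformly: \emph{a decrease in $C_s^{bp*}$ implies a decrease in $N_d^{p*}$.} For the first bullet this is applied to the effect of autonomous vehicles (second item of Proposition~\ref{prop:short_run_property_under_perimeter_control}); for the second bullet it is applied to the comparison of equilibria without and with perimeter control (third item of Proposition~\ref{prop:short_run_property_under_perimeter_control}).

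To prove the claim, I would reparametrize the long-run equilibrium of Lemma~\ref{lemma:long_run_equilibrium_property_perimeter} by the common equilibrium utility $U^*$ instead of by $r_s(0)$. From the Cobb--Douglas indirect utility $U_d(y_d,r_d+r_A)=U^*$ together with $a_d N_d^{p*}=A_d$, one obtains the clean single-variable relation
\[
N_d^{p*} \;=\; \tfrac{1}{\mu}\bigl\{(1-\mu)^{1-\mu}\mu^\mu\bigr\}^{1/\mu}\,A_d\,y_d^{(1-\mu)/\mu}\,U^{*-1/\mu}.
\]
Since $y_d=w-\alpha T_d$ does not depend on $C_s^{bp*}$, $N_d^{p*}$ is strictly decreasing in $U^*$, and the claim reduces to showing that $U^*$ itself is strictly decreasing in $C_s^{bp*}$.

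The final step is an implicit-function-theorem argument on the population constraint. Using the suburban indifference $U_s(x)=U^*$ and the city-edge condition $r_s(x_f)=0$, both $N_s(x;U^*,C)$ and $x_f(U^*,C)$ can be written explicitly in $(U^*,C)$. The aggregate demand $D(U^*,C):=N_d(U^*)+\int_0^{x_f(U^*,C)}N_s(x;U^*,C)\,dx$ is then strictly decreasing in $U^*$ (higher $U^*$ inflates lot sizes everywhere and contracts $x_f$) and strictly decreasing in $C$ (a larger bathtub cost lowers every suburban income $y_s(x)=w-C-\alpha\tau x$ and also contracts $x_f$). Applying the implicit function theorem to $D(U^*,C)=N$ yields $dU^*/dC<0$, hence $dN_d^{p*}/dC>0$, which closes the argument.

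The main technical obstacle is bookkeeping the Leibniz boundary contributions at $x=x_f$ in the two sign computations: since the lot size at the edge is finite ($a_s(x_f)=\mu y_s(x_f)/r_A$), one has $N_s^{p*}(x_f)>0$, so the boundary terms do not vanish and must be counted. Fortunately they reinforce rather than offset the interior contributions in both $\partial_{U^*}D$ and $\partial_C D$, so no sign competition arises and the remaining arithmetic is routine.
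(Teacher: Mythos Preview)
Your argument is correct and in fact more careful than the paper's. The paper does not give a separate formal proof of this proposition; it simply invokes Proposition~\ref{prop:short_run_property_under_perimeter_control} for the short-run statements and then reads Eq.~(\ref{eq:long_run_equilibrium_downtown_commuters_perimeter}) as saying that $N_d^{p*}$ is monotone in $C_s^{bp*}$. That reading is heuristic, since $r_s(0)$ in (\ref{eq:long_run_equilibrium_downtown_commuters_perimeter}) is itself endogenous and, as you implicitly noticed, moves in the \emph{opposite} direction when $C_s^{bp*}$ falls (cheaper suburban commuting raises suburban land demand and hence $r_s(0)$). Your reparametrization by $U^*$ removes this ambiguity: $N_d^{p*}$ is cleanly decreasing in $U^*$ for fixed $y_d$, and the implicit-function argument on the population identity $D(U^*,C)=N$ delivers $dU^*/dC<0$ unambiguously. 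The sign checks you outline (including the Leibniz boundary terms at $x_f$, which indeed reinforce the interior terms) go through, so this is a genuinely more rigorous route than the paper's.

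One caveat on the first bullet. The introduction of autonomous vehicles changes $\alpha$ (and $n_j$), and $\alpha$ enters the long-run problem not only through $C_s^{bp*}$ but also directly via $y_d=w-\alpha T_d$ and via the free-flow term $\alpha\tau x$ in $y_s(x)$. Your comparative statics are taken in $C$ alone, so strictly speaking you have established only the ``via-$C$'' channel for that bullet, not the full effect of the parameter shift. The paper's discussion has exactly the same limitation---it too argues solely through the reduction in $C_s^{bp*}$---so you already match the paper's level of rigor. If you want a complete statement, carry $\alpha$ as a second argument in $D(U^*,C,\alpha)$ and verify that the additional partials reinforce rather than offset the $C$-channel; for the second bullet (perimeter control on vs.\ off) no such extension is needed, since only $C$ changes.
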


\section{Numerical Examples}

\begin{figure}[p]
\begin{minipage} {0.5\columnwidth}
  \centering
  \includegraphics[width=\textwidth]{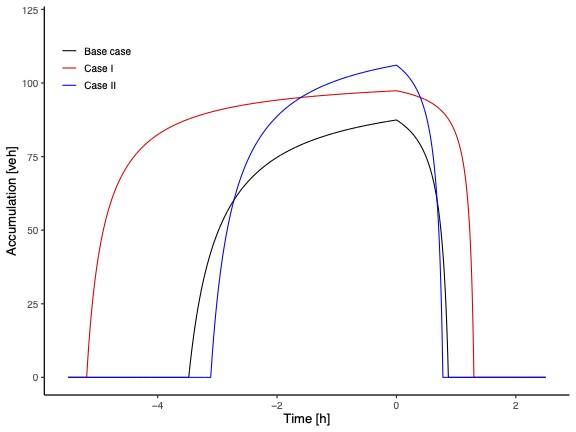}
  \caption{Accumulation in the downtown area}\label{fig:accumulation_short}
\end{minipage}
\begin{minipage} {0.5\columnwidth}
  \centering
  \includegraphics[width=\textwidth]{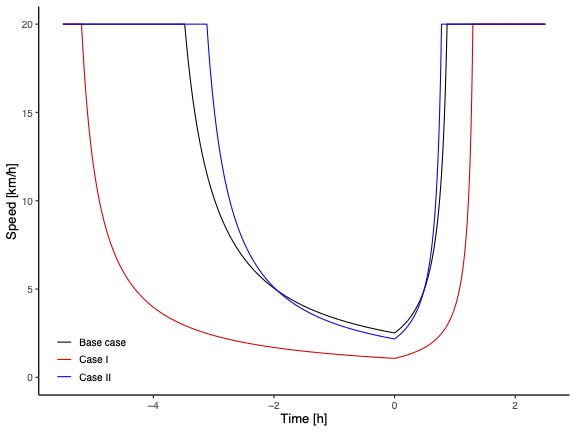}
  \caption{Speed in the downtown area}\label{fig:speed_short}
\end{minipage}
\begin{minipage} {0.5\columnwidth}
  \centering
  \includegraphics[width=\textwidth]{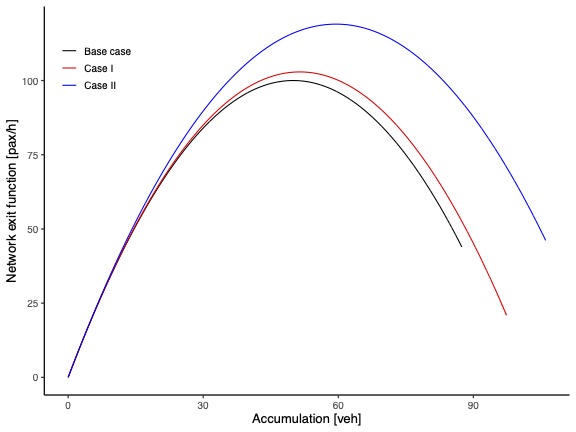}
  \caption{MFD in the downtown area.  }\label{fig:MFD_short}
\end{minipage}
\begin{minipage} {0.5\columnwidth}
  \centering
  \includegraphics[width=\textwidth]{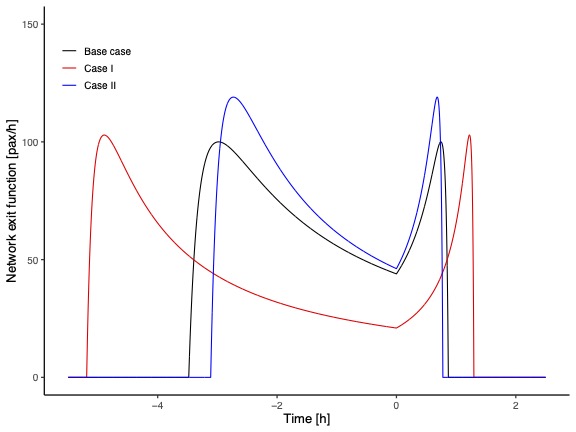}
  \caption{NEF of suburban commuters in downtown area}\label{fig:NEF_short}
\end{minipage}
\begin{minipage} {0.5\columnwidth}
  \centering
  \includegraphics[width=\textwidth]{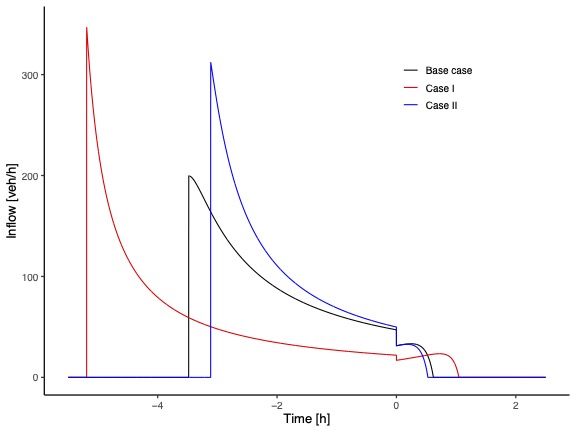}
  \caption{Inflow in downtown area}\label{fig:inflow_short}
\end{minipage}
\end{figure}

\begin{figure}
\begin{minipage}{0.5\columnwidth}
\centering
 \includegraphics[width=\textwidth]{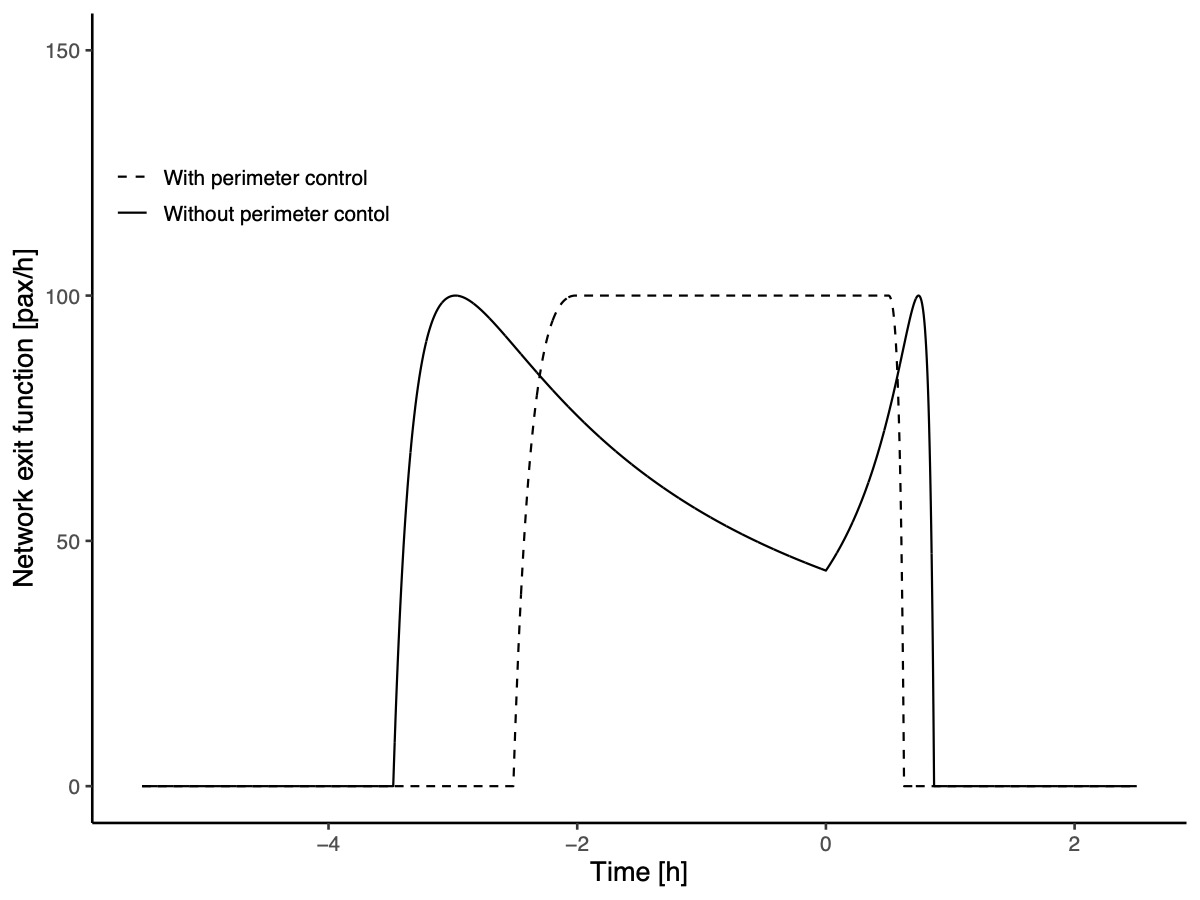}
  \caption{NEF with and without perimeter control (base case)}\label{fig:NEF_peri_base_short}
\end{minipage}
\begin{minipage} {0.5\columnwidth}
  \centering
  \includegraphics[width=\textwidth]{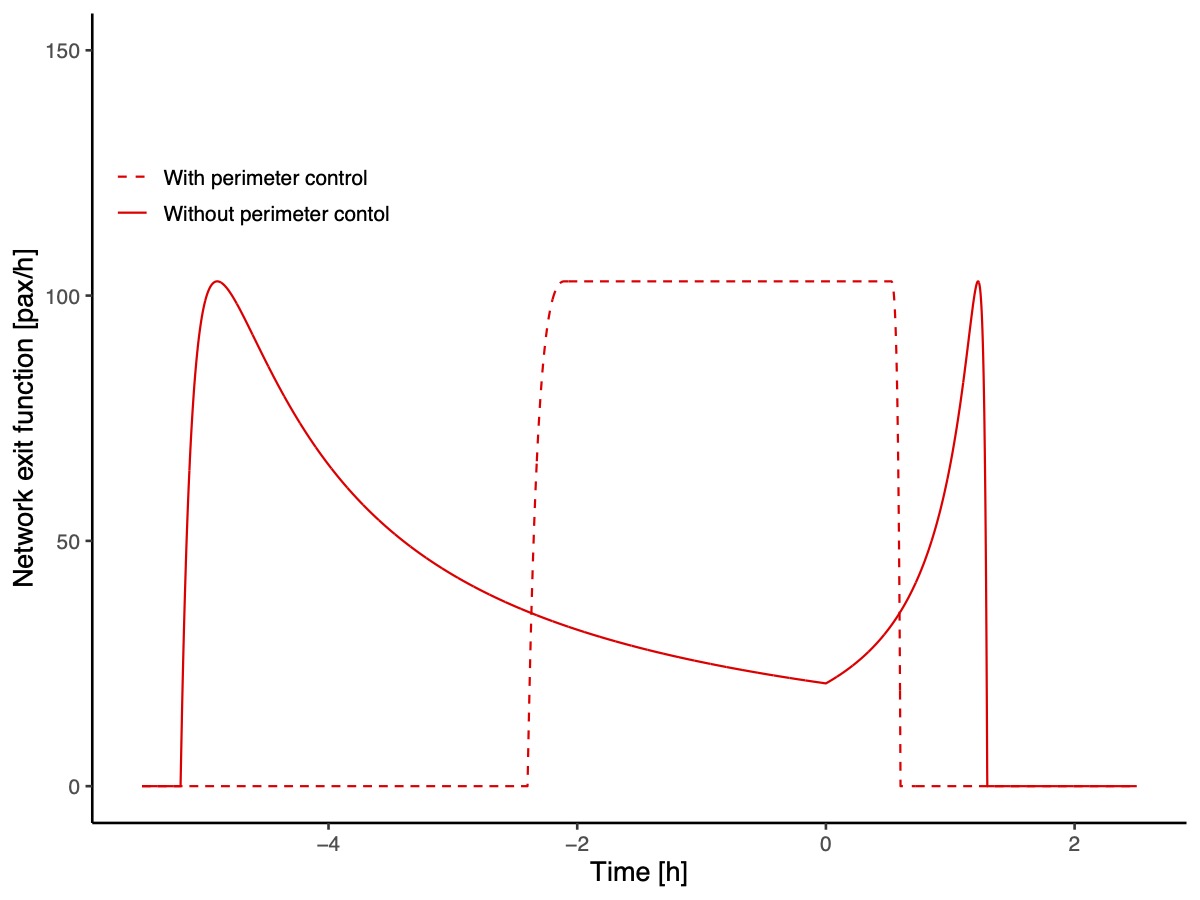}
  \caption{NEF with and without perimeter control (Case I)}\label{fig:NEF_peri_case1_short}
\end{minipage}
\begin{minipage} {0.5\columnwidth}
  \centering
  \includegraphics[width=\textwidth]{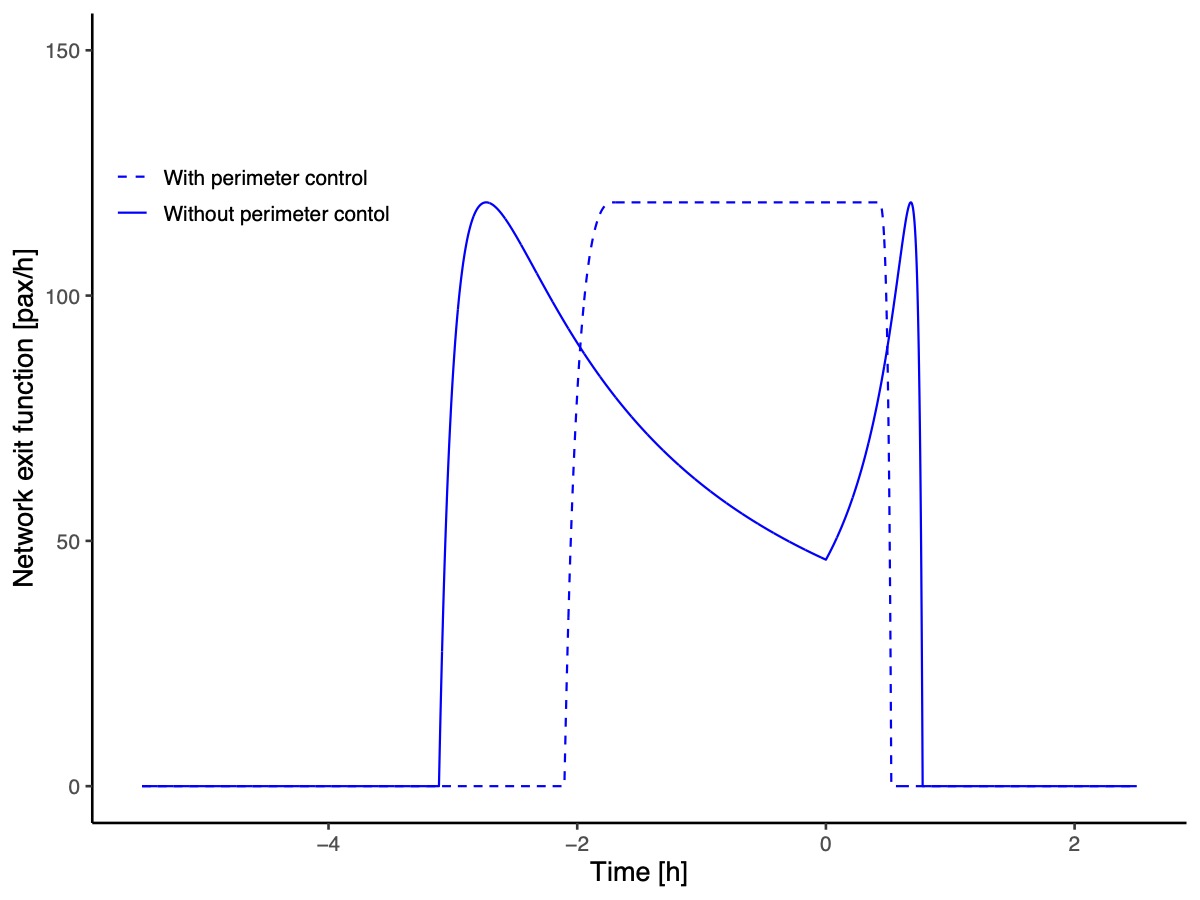}
  \caption{NEF with and without perimeter control (Case II)  }\label{fig:NEF_peri_case2_short}
\end{minipage}
\end{figure}

\begin{table}[t]
  \caption{Numerical results of short-run equilibrium.  }
  \label{table:numerical_result_short}
  \centering
  \begin{tabular}{lccc}
    \hline
    Case  & $C_s^{b*}$ & $C_s^{bp*}$ & $\frac{C_s^{bp*}}{C_s^{b*}}$ \\
    \hline \hline
    Base case ($\eta=1$, $\xi=1$)   & 39.8 & 30.1 & 0.76 \\
    Case I ($\eta=0.59$, $\xi=1.029$)   & 54.8 & 26.9 & 0.49 \\
    Case II ($\eta=0.76$, $\xi=1.19$)   & 34.9 & 24.8 & 0.71 \\
    \hline 
  \end{tabular} 
\end{table}

\subsection{Short-run equilibrium} \label{section:short-run}
First, we numerically investigate the short-run equilibrium patterns to understand the short-run impacts of autonomous vehicles and perimeter control. We set the fixed suburban commuters to $N_s=300$ [pax]. The other parameters related the short-run equilibrium are as follows: $v_f=20$ [mph], $n_j=100$ [veh], $\alpha, \beta, \gamma = 20, 10, 40$ [\$/h], $t^*=0$, $L=5$ [mile]. Based on the empirical studies on the impacts of autonomous vehicles (see the details in Section 2.1), we compare the equilibrium patterns of three cases: the case of base equilibrium (Base case); a case with the highest VOT and lowest network capacity effects of autonomous vehicles (Case I); and a case with the lowest VOT and highest network capacity effects of autonomous vehicles (Case II). In addition to the above parameters, we set the VOT effect parameter to $\eta=0.59$ (Case I) and $0.76$ (Case II), and the network capacity effect parameter to $\xi = 1.029$ (Case I) and $1.19$ (Case II). 

Figs. \ref{fig:accumulation_short} - \ref{fig:inflow_short} show the equilibrium patterns for three cases. In all cases, a similar evolution of car accumulation is obtained as depicted in Fig. \ref{fig:accumulation_short}. The introduction of autonomous vehicles results in more vehicles circulating in the downtown area at the desired arrival time, due to the network capacity effect. The rush hour is longer in Case I and shorter in Case II than that in the Base case. This indicates that the introduction of autonomous vehicles increases and decreases the short-run equilibrium bathtub cost in Case I and II, respectively, as shown in Table \ref{table:numerical_result_short}.

Fig. \ref{fig:speed_short} reveals that the speed at the desired arrival time in Case II is higher than that in Case I, although the maximum accumulation observed at the desired arrival time in Case II is higher than that in Case I due to the higher network capacity effect in Case II. This leads to the higher NEF in Case II when the maximum accumulation is reached, as depicted in Fig. \ref{fig:MFD_short}. Moreover, even though the maximum accumulation in Case II is higher than that in the Base case in the hypercongested regime,  the NEF in Case II is higher than that in the Base case (with human-driven vehicles) at their maximum accumulation.  Therefore, a lighter capacity drop is observed in Case II, as depicted in Fig. \ref{fig:NEF_short}, leading to a shorter rush hour and resulting in a decrease in the short-run equilibrium bathtub cost.

Conversely, the higher temporal concentration of traffic demand (Fig. \ref{fig:inflow_short}), due to the high VOT effect, induces a more severe capacity drop in Case I compared to the Base case and Case II. This results in an increase in the short-run equilibrium bathtub cost. Therefore, the introduction of autonomous vehicles may increase or decrease the short-run equilibrium bathtub cost in the presence of hypercongestion.

Perimeter control implementation shortens the rush hour and results in a decrease in the short-run equilibrium bathtub cost in all cases, as illustrated in Figs. \ref{fig:NEF_peri_base_short} - \ref{fig:NEF_peri_case2_short}. Since hypercongestion never occurs under perimeter control, more commuters arrive at their destinations near their desired arrival times, which results in a shorter rush hour.  Note that the applied perimeter control strategy is similar to that proposed in \cite{daganzo2007urban}. Even if other advanced control strategies, such as reinforcement learning-based approach \citep{chen2022data}, aiming at maximizing throughput in the downtown area  are employed,  the short-run bathtub equilibrium cost would remain the same. This results in the same impact on the long-run equilibrium as the perimeter control strategy applied in this paper. It is also important to note that the control inflow could be influenced by various factors such as the network infrastructure \citep[e.g.,][]{aboudolas2013perimeter}, it can be challenging to set the targeted accumulation precisely at the critical value. Such biases are investigated in Appendix F.

To further investigate the effect of perimeter control, we estimate the ratio of the short-run bathtub equilibrium cost with perimeter control to that without perimeter control as shown in the last column of Table \ref{table:numerical_result_short}. Given that the duration of hypercongestion is  longest and a severe capacity drop occurs in Case I among the three cases, the effect of perimeter control is the largest in Case I. Although the duration of hypercongestion is shorter in Case II compared to the Base case, the effect of perimeter control is larger in Case II due to more temporal concentration of traffic demand (Fig. \ref{fig:inflow_short}) and higher network capacity.

\subsection{Short-run and long-run equilibria}

\begin{figure}[p]
\begin{minipage} {0.5\columnwidth}
  \centering
  \includegraphics[width=\textwidth]{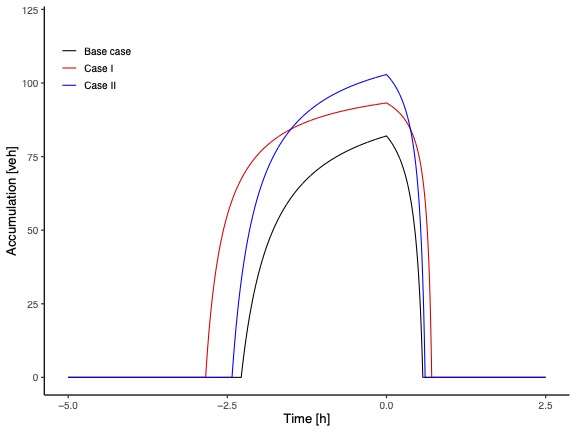}
  \caption{Accumulation in the downtown area}\label{fig:accumulation}
\end{minipage}
\begin{minipage} {0.5\columnwidth}
  \centering
  \includegraphics[width=\textwidth]{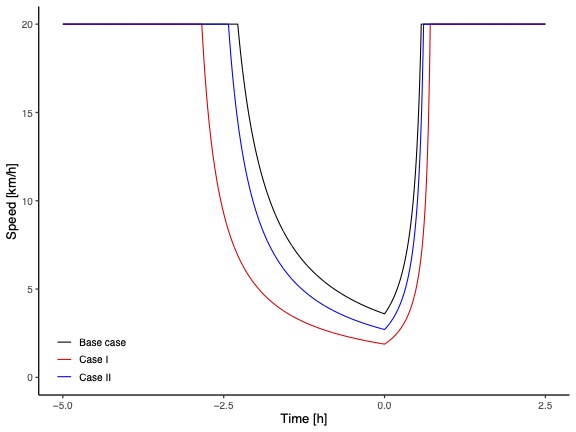}
  \caption{Speed in the downtown area}\label{fig:speed}
\end{minipage}
\begin{minipage} {0.5\columnwidth}
  \centering
  \includegraphics[width=\textwidth]{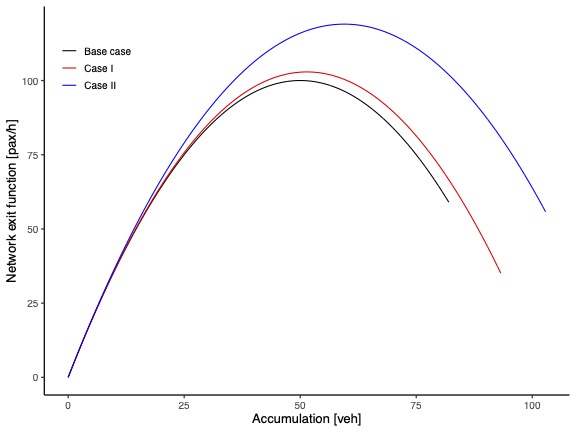}
  \caption{MFD in the downtown area.  }\label{fig:MFD}
\end{minipage}
\begin{minipage} {0.5\columnwidth}
  \centering
  \includegraphics[width=\textwidth]{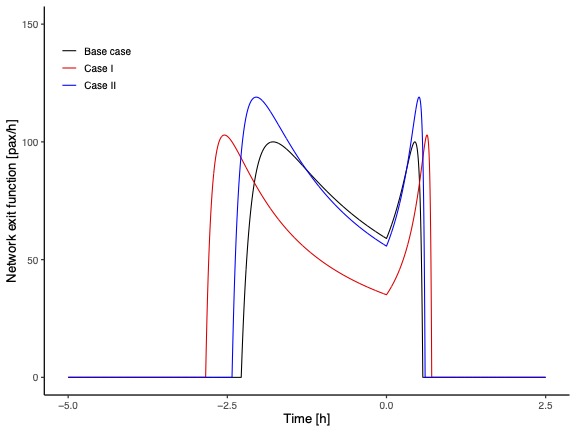}
  \caption{NEF of suburban commuters in downtown area}\label{fig:NEF_UE_AV}
\end{minipage}
\begin{minipage} {0.5\columnwidth}
  \centering
  \includegraphics[width=\textwidth]{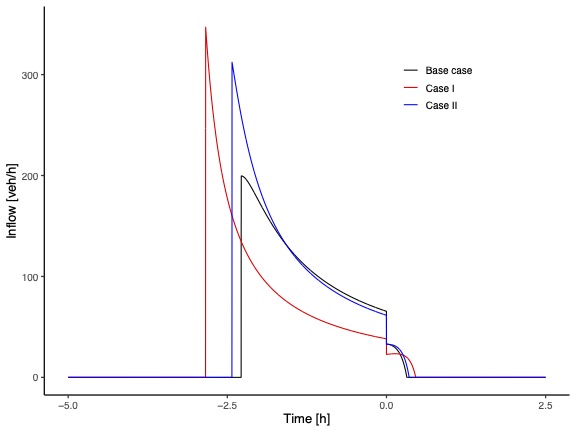}
  \caption{Inflow in downtown area}\label{fig:inflow}
\end{minipage}
\end{figure}

\begin{table}[t]
  \caption{Numerical results of short-run and long-run equilibria.  }
  \label{table:numerical_result}
  \centering
  \begin{tabular}{lcccccc}
    \hline
    Case  &  $N_s$ & $C_s^{b*}$ &$U^*$& $N_s^p$ & $C_s^{bp*}$ & $U^{p*}$ \\
    \hline \hline
    Base case ($\eta=1$, $\xi=1$)  & 224.0 & 27.8 & 4.594 &  252.2  & 26.3 & 4.684 \\
    I ($\eta=0.59$, $\xi=1.029$)  &  221.2 & 31.4 & 4.586 & 305.5 & 27.4 & 4.883 \\
    II ($\eta=0.76$, $\xi=1.19$)  & 256.6 & 28.1 & 4.699 & 308.1 &  25.4 & 4.894 \\
    \hline 
    \multicolumn{7}{r}{{\footnotesize superscript $p$ describes variables under perimeter control.}}
  \end{tabular}
\end{table}

\begin{figure}[!t]
\begin{minipage} {0.5\columnwidth}
  \centering
  \includegraphics[width=\textwidth]{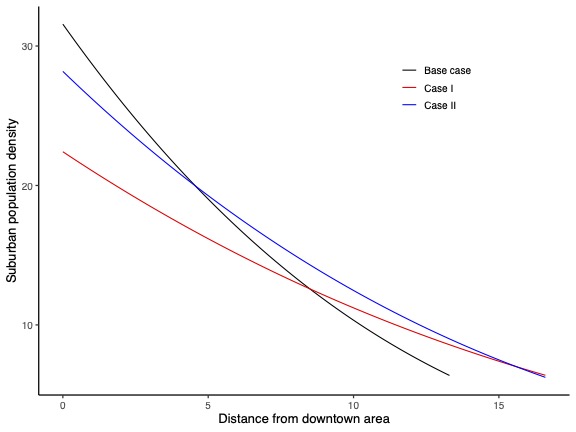}
  \caption{Suburban population density by residential location}\label{fig:suburban_population_density}
\end{minipage}
\begin{minipage} {0.5\columnwidth}
  \centering
  \includegraphics[width=\textwidth]{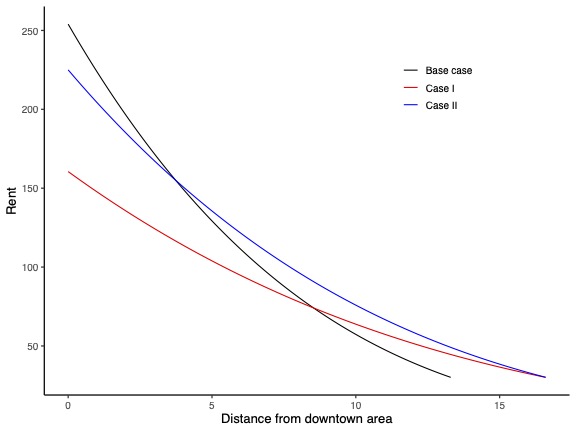}
  \caption{Rent by residential location}\label{fig:Rent_av}
\end{minipage}
\begin{minipage} {0.5\columnwidth}
  \centering
  \includegraphics[width=\textwidth]{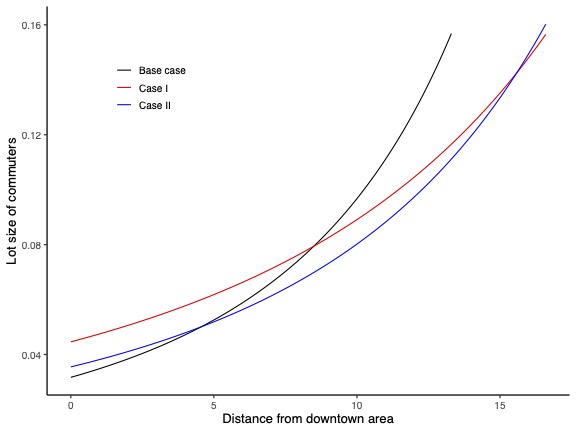}
  \caption{Lot size of commuters by residential location}\label{fig:lot_size_av}
\end{minipage}
\end{figure}

\begin{figure}[p]
\begin{minipage} {0.5\columnwidth}
  \centering
  \includegraphics[width=\textwidth]{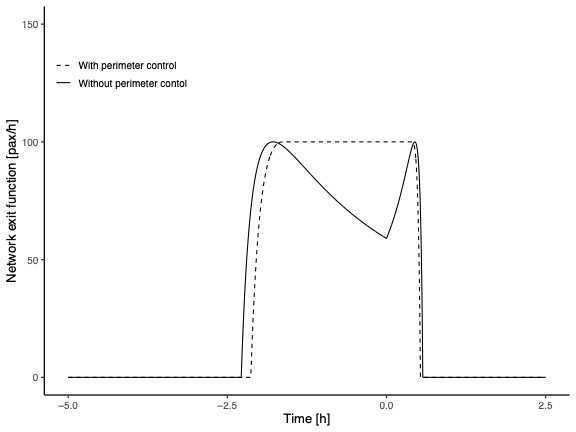}
\subcaption{NEF}\label{fig:NEF_base}
\end{minipage}
\begin{minipage} {0.5\columnwidth}
  \centering
  \includegraphics[width=\textwidth]{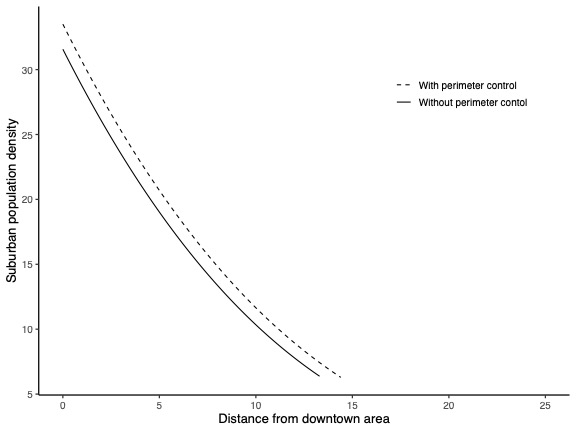}
  \subcaption{Suburban population density}\label{fig:suburban_base}
\end{minipage}
\caption{Base case} \label{fig:base_case}
\begin{minipage} {0.5\columnwidth}
  \centering
  \includegraphics[width=\textwidth]{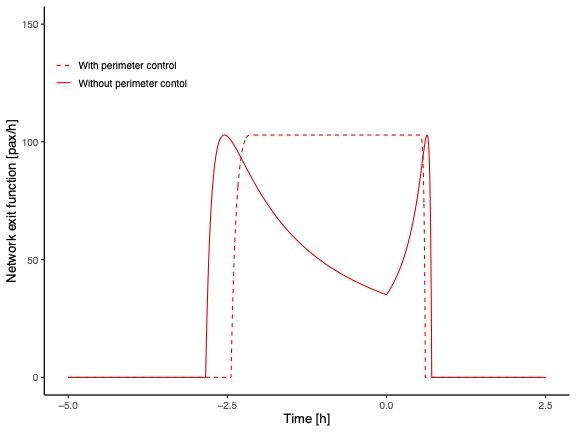}
\subcaption{NEF}\label{fig:NEF_case1}
\end{minipage}
\begin{minipage} {0.5\columnwidth}
  \centering
  \includegraphics[width=\textwidth]{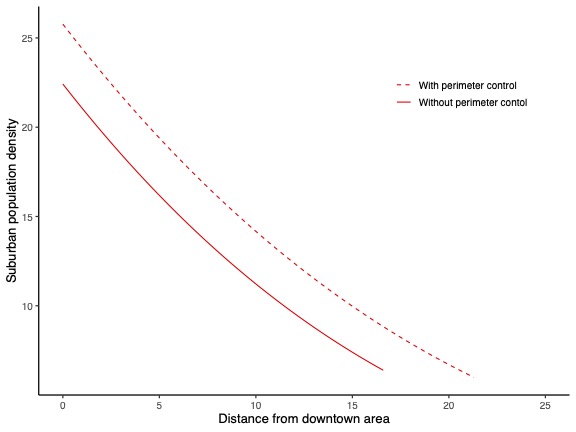}
    \subcaption{Suburban population density}\label{fig:suburban_case1}
\end{minipage}
\caption{Case I} \label{fig:case1}

\begin{minipage} {0.5\columnwidth}
  \centering
  \includegraphics[width=\textwidth]{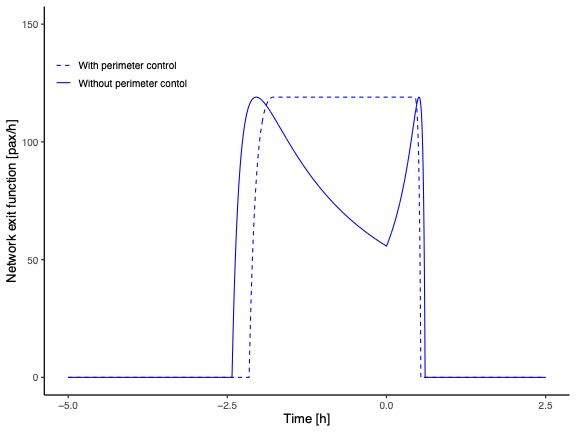}
\subcaption{NEF}\label{fig:NEF_case2}
\end{minipage}
\begin{minipage} {0.5\columnwidth}
  \centering
  \includegraphics[width=\textwidth]{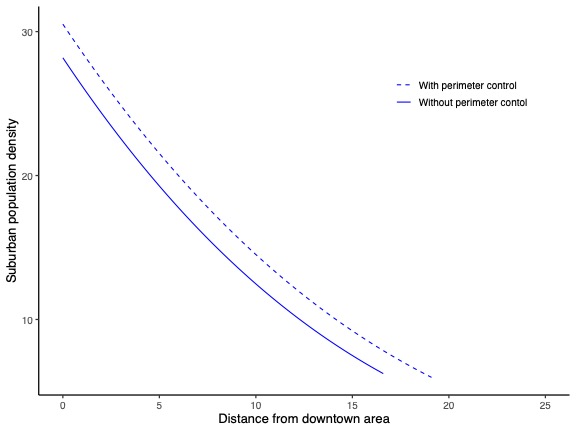}
    \subcaption{Suburban population density}\label{fig:suburban_case2}
\end{minipage}
\caption{Case II} \label{fig:case2}
\end{figure}

We next investigate the equilibrium patterns including the long-run equilibrium. Instead of using the fixed constant suburban commuters in Section \ref{section:short-run}, the total population is set to $N=600$ [pax]. We also set the parameters related to the long-run equilibrium as follows: $r_A = 30$, $w=60$, $\mu=0.25$, $A_d=2$, and $A_s(x)=1$. We use the same cases as in Section \ref{section:short-run} to compare the impact of autonomous vehicles and hypercongestion mitigation by perimeter control on the equilibrium patterns.

Table. \ref{table:numerical_result} shows that the introduction of autonomous vehicles decreases and increases the suburban population in Case I and II, respectively. When autonomous vehicles are introduced, less commuters choose the suburban area as the residential location in the long-run due to an increases in the short-run equilibrium bathtub cost in Case I and more commuters choose the suburban area in the long-run due to a decrease in the short-run equilibrium bathtub cost in Case II. 
Even though the suburban population in Case I is lower than that in the base case, the short-run equilibrium bathtub cost increases (i.e., longer rush hour in Figs. \ref{fig:accumulation} and \ref{fig:speed}) due to a severe capacity drop (Figs. \ref{fig:MFD} and \ref{fig:NEF_UE_AV}) by more temporal concentration of traffic demand by the VOT effect (Fig. \ref{fig:inflow}).  On the other hand, the higher suburban population leads to an increase in the short-run equilibrium bathtub cost in Case II.

{\it Although the suburban population decreases in Case I by the introduction of autonomous vehicles, the city spatially expand outward}, as depicted in Figs. \ref{fig:suburban_population_density} - \ref{fig:lot_size_av}.  This is because as the VOT effect is higher, suburban commuters care less about free-flow travel time in the suburban areas, which results in the farther city boundary. Interestingly, Table \ref{table:numerical_result} revelas that the utility level may decreases despite that commuters can relocate in response to the changes in their commuting behaviors due to the introduction of autonomous vehicles. Therefore, {\it the introduction of autonomous vehicles may decrease the utility level in the long-run due to a severe capacity drop in the short-run.}  

As perimeter control implementation decreases the short-run equilibrium bathtub cost, the suburban population increases, which results in the city expansion, as depicted in Table \ref{table:numerical_result} and Figs. \ref{fig:base_case} - \ref{fig:case2}. Thus, hypercongestion mitigation by perimeter control decreases the short-run equilibrium bathtub cost, and results in a decreases in the downtown population in the long-run.

The short-run equilibrium bathtub cost in Case I  is higher than that in the Base case with and without perimeter control, but the utility level is lower without perimeter control and higher with perimeter control in Case I than those in the Base case.  This is because unlike the case in the presence of hypercongestion, the short-run equilibrium bathtub cost decreases with perimeter control and more  commuters reside in the suburban area in the long-run. Furthermore, the high VOT effect decreases the suburban population density at location $x$. Thus, both the short-run equilibrium bathtub cost and the utility level are higher in Case I than in the Base case. The same effects exist in Case II. However, the high network capacity effect causes more suburban commuters to arrive at their destinations in the short-run, but the low VOT effect causes a higher suburban population density and results in fewer commuters residing in the suburban area in the long-run compared with Case I. Thus, the short-run equilibrium bathtub cost is lower, but the utility level is higher in Case II than in the Base case.

Next, we conduct a sensitivity analysis of utility with respect to both the network capacity and VOT effects, as depicted in Fig.~\ref{fig:sensitivity_utility}. The pictures of the utility levels with and without perimeter control change entirely. Without perimeter control, 
both the network capacity and the VOT effects increase the utility level  when the VOT effect is low (i.e., high value of time in Fig.~\ref{fig:utility_sensitivity_UE}).   However, the utility level begins to decrease when the VOT effect exceeds the threshold values. This is because a high VOT effect leads to a severe capacity drop, resulting in an increase in the short-run equilibrium bathtub cost. Consequently, the income net of the suburban commuting cost decreases and the downtown population increases, leading to a lower utility level in the long-run.  
When the VOT effect is low, the short-run equilibrium bathtub cost decreases, and the free-flow travel time cost in the suburban area also decreases, resulting in a higher utility level. When perimeter control is implemented, both network capacity and VOT effects always increase the utility level, as depicted in Fig.~\ref{fig:utility_sensitivity_perimeter_control}. As discussed earlier, both effects consistently decrease the short-run equilibrium bathtub cost, resulting in an increase in income net of the suburban commuting cost and a decrease in  the downtown population. They contribute to a higher utility level.

\begin{figure}[!t]
\begin{minipage} {0.5\columnwidth}
  \centering
  \includegraphics[width=\textwidth]{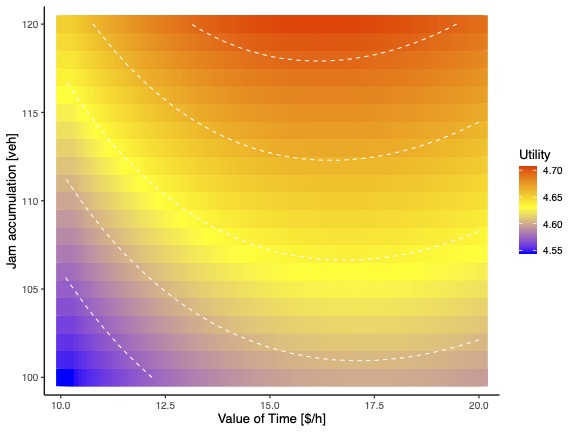}
  \subcaption{Without perimeter control}\label{fig:utility_sensitivity_UE}
\end{minipage}
\begin{minipage} {0.5\columnwidth}
  \centering
  \includegraphics[width=\textwidth]{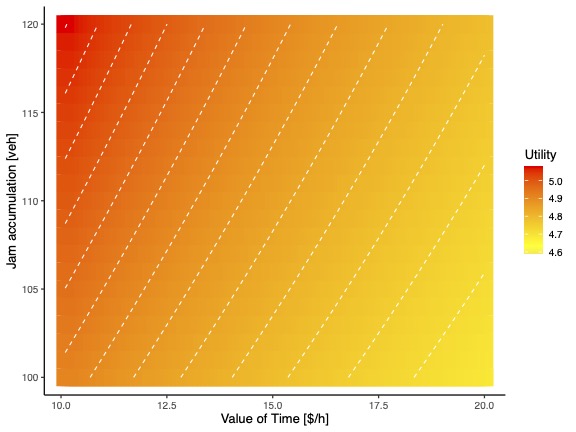}
  \subcaption{With perimeter control}\label{fig:utility_sensitivity_perimeter_control}
\end{minipage}
\caption{Sensitivity analysis of utility}
\label{fig:sensitivity_utility}
\end{figure}

\section{Conclusions}
In this paper, by  incorporating a bathtub model, we develop a land use model where hypercongestion occurs in the downtown area and interacts with land use  to examine the effects of hypercongestion mitigation by perimeter control and the introduction of autonomous vehicles on the spatial structures of cities. The results indicate that (I) hypercongestion mitigation decreases the commuting cost and results in a less dense urban spatial structure, (II) the introduction of autonomous vehicles may increase the commuting cost in the presence of hypercongestion and results in a decrease in the suburban population (i.e., total car traffic demand) in the long-run, but make cities spatially expanded outward,  and (III) the introduction of autonomous vehicles under perimeter control decreases the commuting cost and results in a less dense urban spatial structure. We also show that the introduction of autonomous vehicles may decrease the utility level in the long-run due to a severe capacity drop in the short-run. These results show that hypercongestion is a key factor that can change urban spatial structure. Specifically, the effect of autonomous vehicles on the urban spatial structures depends on  the presence of hypercongestion. Moreover, the effect of autonomous vehicles on the urban spatial structure in the presence of hypercongestion contradicts that in the standard bottleneck model. 

This paper has several future directions. First, we assumed commuter homogeneity. Incorporating heterogeneity such as in trip length \cite[e.g.,][]{fosgerau2015congestion, lamotte2018morning} and preferences \cite[e.g.,][]{hall2018pareto, takayama2020scheduling} is an  interesting direction. Particularly, since changes in the OD demand patterns in the downtown area can impact the shape of the MFD, incorporating the city structure of the downtown area and the trip-based bathtub model is worth investigating.  Second, we assumed a unimodal transportation system. Future models can be extended to multimodal transportation systems \cite[e.g.,][]{dantsuji2022perimeter}. Third, we examined a scenario where all human-driven vehicles are replaced by autonomous vehicles. Extending it to scenarios involving mixed traffic of autonomous and human-driven vehicles would be desirable. Although there are some studies on the mixed traffic of autonomous and human-driven vehicles in the standard bottleneck model with heterogeneous commuters \cite[e.g.,][]{van2016autonomous, wu2023managing}, developing a mixed bimodal bathtub model with heterogeneous commuters
in terms of value of travel time remains a challenging task, which represents one of the important future directions. Finally, the  monocentric city structure in this work can be extended to a  polycentric city structure and/or multi-region traffic systems. 

\section*{Acknowledgements}
We thank Se-il Mun, Tatsuhiko Kono, Minoru Osawa, and Koki Satsukawa for their valuable comments. This work was supported by JST ACT-X, Japan (grant \#JPMJAX21AE), by JST FOREST Program, Japan (grant \#JPMJFR215M), by JSPS KAKENHI, Japan (grant \#23K13422 and \#22H01610).

\section*{Appendix A: proof of Lemma 1}
As \cite{small2003hypercongestion} assumed the Ardekani--Herman formula (i.e., $v(t)=v_f \left( 1 - n(t)/n_j\right)^{1+\rho}$, where $\rho$ is a parameter) for the space-mean speed in the downtown area, the Greenshields model in this paper can be regarded as a special case of the \cite{small2003hypercongestion}  model where $\rho=0$. Here, we briefly review the solution of the bathtub model. Differentiating the bathtub  cost $C_s^b(t)=\alpha T(t) + s(t)$ with respect to time at equilibrium (i.e., ${\mathrm{d} C_s^b(t)}/{\mathrm{d}t}=0$) yields 
\begin{align} \label{eq:differential_bathtub_cost}
\frac{\mathrm{d} T(t)}{\mathrm{d}t} =
\begin{cases}
\frac{\beta}{\alpha} & \qquad  \text{if } t \leq t^*  \\
- \frac{\gamma}{\alpha} & \qquad \text{if }t > t^*.
\end{cases}
\end{align}
From Eqs.~(\ref{eq:speed}) and (\ref{eq:TT_assumption}), we have 
\begin{align}\label{eq:differential_TT}
  \frac{\mathrm{d} T(t)}{\mathrm{d}t} = \frac{L}{n_j v_f} \left( 1 - \frac{n(t)}{n_j}\right)^{-2}.
\end{align}
Combining Eqs.~(\ref{eq:differential_bathtub_cost}) and (\ref{eq:differential_TT}),  grouping the terms in $n(t)$ and the other terms on the LHS and RHS, respectively, and integrating both sides yields 
\begin{align} \label{eq:accumulation}
n(t) =
\begin{cases}
n_j \left( 1 - \frac{1}{\frac{\beta v_f}{\alpha L}t + C_e}\right) & \qquad  \text{if } t \leq t^*  \\
n_j  \left( 1 - \frac{1}{-\frac{\gamma v_f}{\alpha L}t + C_l}\right) & \qquad \text{if }t > t^*,
\end{cases}
\end{align}
where $C_e$ and $C_l$ are constants of integration for earliness and lateness, respectively. Since the accumulation at the start and end of the rush hour is zero (i.e., $n(t_s)=n(t_e)=0$),
\begin{align} \label{eq:accumulation_derived}
n(t) =
\begin{cases}
n_j \left( 1 - \frac{1}{1 + \frac{\beta v_f}{\alpha L} (t - t_s) }\right) & \qquad  \text{if } t \leq t^*  \\
n_j  \left( 1 - \frac{1}{1 + \frac{\gamma v_f}{\alpha L} (t_e - t) }\right) & \qquad \text{if }t > t^*,
\end{cases}
\end{align}
From Condition (\ref{eq:N_s}), 
\begin{align} 
 N_s = \alpha n_j \left( \frac{1}{\beta} + \frac{1}{\gamma} \right) \left( \log \theta + \frac{1}{\theta} - 1 \right),
\end{align} 
where $\theta = \frac{C_s^{b*} v_f}{\alpha L }$. 

When $\theta=2$, the maximum accumulation reached during the rush hour is the critical accumulation ($n_j/2$) at the desired arrival time ($n(t^*)=n_j/2$). Therefore, hypercongestion exists if $\theta>2$. 

\section*{Appendix B: proof of Lemma \ref{lemma:long_run_equilibrium_property0}}
As $\frac{d U_s(x)}{dx}=0$ at long-run equilibrium, we have from Eq.~(\ref{eq:indirect_utility_suburban}), 
\begin{align} \label{eq:dr_dx}
\frac{d r_s(x)+r_A}{dx} = - \alpha \tau \frac{ r_s(x)+r_A}{\mu y_s(x)}.
\end{align}
The RHS is negative. 

Combining Condition~(\ref{eq:long4}) and Eq.~(\ref{eq:lot_size_equilibrium}) yields 
\begin{align} \label{eq:ns_as}
\frac{N_s(x)}{A_s(x)} = \frac{r_s(x)+r_A}{\mu y_s(x)}.
\end{align}
By differentiating Eq.~(\ref{eq:ns_as}) with respect to $x$ and substituting Eq.~(\ref{eq:dr_dx}) into Eq.~(\ref{eq:ns_as}), we obtain 
\begin{align}
\frac{d \frac{N_s(x)}{A_s(x)}}{dx} = - \alpha \tau \frac{(1-\mu) (r_s(x)+r_A)}{ \{ \mu \left(w - C_s^{b*} - \alpha \tau x\right)\}^2}.
\end{align}
The RHS is negative.

\section*{Appendix C: proof  of Lemma \ref{lemma:long_run_equilibrium_property}}
At the long-run equilibrium, the indirect utility satisfies $U^*_d=U_s^*(x)$ for all $ x \in [ 0, x_f ] $ , and this condition   gives 
\begin{align}  \label{eq:income_rent_relation}
\frac{r_d+r_A}{r_s(x)+r_A} = \left(\frac{y_d}{y_s(x)}\right)^{\frac{1}{\mu}}.
\end{align} 
As we have $a_d=\frac{A_d}{N_d}$ from Condition (\ref{eq:long3}), substituting these into Eq.~(\ref{eq:lot_size_equilibrium}) results in $N^*_d$ as follows. 
\begin{align} 
N^*_d &=\frac{1}{\mu} \{y_d\}^{\frac{1-\mu}{\mu}} \left( w - C_s^{b*} \right)^{-\frac{1}{\mu}} \left( r_s(0) + r_A \right)  A_d.
\end{align} 
Similarly, we have  
\begin{align} 
&N_s(x) =  \nonumber \\
& \frac{1}{\mu} \{w - C_s^{b*} - \alpha \tau x \}^{\frac{1-\mu}{\mu}} \{  y_d \}^{-\frac{1}{\mu}} \left( r_s(0) + r_A \right)  A_s(x). 
\end{align} 
Furthermore, as $U_s^*(0)=U_s^*(x_f)$, we obtain 
\begin{align} 
  x_f = \frac{w-C_s^{b*}}{\alpha \tau} \left( \{ r_A \}^{-\mu} - \{ r_s(0) + r_A \}^{-\mu} \right)\{ r_A \}^{\mu}.
\end{align}

\section*{Appendix D: proof of Lemma \ref{lemma:short-run_perimeter}}
As the NEF is maintained at the maximum during perimeter control, the number of suburban commuters who arrive at their destinations during perimeter control is 
\begin{align}  \label{eq:number_commuters_during_perimeter}
N_s^p =  \frac{n_jv_f}{4L} \left(t_e^p - t_s^p \right).
\end{align} 
Since $C_s^{p*}= 2 \alpha L/v_f + \beta (t^* - t_s^p) =  2 \alpha L/v_f + \beta ( t_e^p - t^*) $, substituting it into Eq.~(\ref{eq:number_commuters_during_perimeter}) produces 
\begin{align}  \label{eq:number_commuters_during_perimeterv2}
N_s^p =  \frac{\alpha n_j}{4} \left( \frac{1}{\beta} + \frac{1}{\gamma} \right) \left( \theta^p - 2 \right).
\end{align} 
According to Eq.~(\ref{eq:accumulation_derived}), the start and end times of perimeter control ($t_s^p$ and $t_e^p$, respectively; $n(t_s^p)=n(t_e^p)=n_j/2$) are
\begin{align}  
t_s^p - t_s = \frac{1}{\beta} \frac{\alpha L}{v_f}, \label{eq:earliness_perimeter}\\
t_e - t_e^p = \frac{1}{\gamma} \frac{\alpha L}{v_f}.  \label{eq:lateness_perimeter}
\end{align} 
Therefore, the number of suburban commuters who arrive at their destinations before and after perimeter control is computed using Eqs.~(\ref{eq:speed}), (\ref{eq:accumulation_derived}), (\ref{eq:earliness_perimeter}), and (\ref{eq:lateness_perimeter}) as follows: 
\begin{align}  \label{eq:number_commuters_outside_perimeter}
N_s^{op} &=  \int_{t_s}^{t_s^p} \frac{n(s)v(s)}{L} ds + \int_{t_e^p}^{t_e} \frac{n(s)v(s)}{L} ds  \nonumber \\
&= \alpha n_j \left( \frac{1}{\beta} + \frac{1}{\gamma} \right)  \left( \ln 2 - \frac{1}{2} \right).
\end{align} 
We combine Eqs.~(\ref{eq:number_commuters_during_perimeterv2}) and (\ref{eq:number_commuters_outside_perimeter}), and the number of suburban commuters is expressed as 
\begin{align} \label{eq:number_commuters_perimeter}
        N_s = \alpha n_j \left( \frac{1}{\beta} + \frac{1}{\gamma} \right) \left( \frac{\theta^p}{4} + \ln 2 - 1 \right).
\end{align}

Next, we prove that a queue develops at the perimeter boundary, and its length increases toward the desired arrival time. Eq. (\ref{eq:gc_perimeter}) and Condition (\ref{eq:UE_con_car_perimeter}) yield 
\begin{align}
\begin{cases}
\alpha   \frac{\mathrm{d} T_w(t)}{\mathrm{d} t } - \beta = 0  \\
\\
\alpha   \frac{\mathrm{d} T_w(t)}{\mathrm{d} t } + \gamma = 0.
\end{cases}    
\end{align}
Combining this with Eq.~(\ref{eq:waiting_time_boundary}), where $I_p$ is constant, we obtain 
\begin{align}
\begin{cases}
  \frac{\mathrm{d} q(t)}{\mathrm{d} t } = \frac{\beta}{\alpha}  \\
\\
\frac{\mathrm{d} q(t)}{\mathrm{d} t } = - \frac{\gamma}{\alpha}.
\end{cases}    
\end{align}
Thus, a queue starts to develop once perimeter control is implemented, and its length increases toward the desired arrival time. Then its length decreases after the desired arrival time.

\section*{Appendix E: proof of $C_s^{b*} > C_s^{pb*}$}
When hypercongestion exists without perimeter control, $\theta > 2$ from Lemma \ref{lemma:short_run_without}.  Consider a set of parameters (including $N_s$), $\Psi$, where there exists $\theta(>2)$ such that $F(\theta)=0$. Under $\Psi$, it always holds that $F^p(\theta)<0$. This leads to  $\theta > \theta^p$ under $\Psi$ if there exists $\theta^p(>2)$  such that $F^p(\theta^p)=0$. $\theta > \theta^p$ yields $C_s^{b*} > C_s^{pb*}$, which completes the proof.

\section*{Appendix F: biases in the critical accumulation estimation }
Since the controlled inflow could be influenced by various factors such as infrastructure limitation \citep{aboudolas2013perimeter}, the selection of  measurement points for the MFD estimation \citep{ortigosa2015study}, and the estimation methods \citep{leclercq2014macroscopic}, it is important to investigate the effects of these biases. To this end, we introduce the bias factor into the critical accumulation. 
\begin{align} \label{eq:epsilon_accumulation}
    n_{cr} = \frac{\epsilon n_j}{2},
\end{align}
where $\epsilon$ ($>0$) is the bias factor caused by the reasons mentioned above. $\epsilon>1$ indicates the accumulation under perimeter control is above the critical value.

As the only difference from the non-biased perimeter control is the bias factor $\epsilon$, in the same way as Section 4, we have the short-run equilibrium bathtub cost satisfying
\begin{align}
& F_{\epsilon}^p(\theta_{\epsilon}^p) \equiv  \nonumber \\ & N_s - \alpha n_j \left( \frac{1}{\beta} + \frac{1}{\gamma} \right) \left( \frac{\epsilon (2 - \epsilon)}{4} \theta_{\epsilon}^p + \ln \frac{2}{2-\epsilon}  - \epsilon \right)  \nonumber \\
&  = 0. \label{eq:equilibrium_bathtub_cost_perimeter_epsilon}
\end{align}
Note that if the critical accumulation estimation is not biased (i.e., $\epsilon=1$), Eq. (\ref{eq:equilibrium_bathtub_cost_perimeter_epsilon}) is identical to Eq. (\ref{eq:equilibrium_bathtub_cost_perimeter}).

We then conduct the numerical analyses for two cases: (A) higher accumulation than the critical value ($\epsilon=1.3$), and (B) lower accumulation than the critical value ($\epsilon=0.7$). Note that the outflow under perimeter control is maintained at the same level for both cases, but the traffic state is hypercongested in Case (A) and congested in Case (B). Fig. \ref{fig:epsilon_NEF} displays the NEFs for the cases including the non-biased case ($\epsilon=1$) and the case without perimeter control (UE).

We observe that the rush hour of the non-biased case is the shortest because the outflow can be maintained at its maximum, leading to the lowest short-run bathtub equilibrium cost. Interestingly, the rush hour in Case (A) is shorter than that in Case (B), even though the accumulation under perimeter control is in a hypercongested state for Case (A). This is because the maximum outflow is never reached for Case (B), while there are time windows in Case (A) when the outflow is higher than the outflow under perimeter control.

Since the traffic dynamics is unstable in the hypercongested state, further investigation is required. However, the obtained result suggests the possibility that controlling the inflow such that the outflow is slightly below the optimal value may be less effective than controlling the inflow such that the outflow is slightly above the optimal value. 

\begin{figure}
\centering
 \includegraphics[width=0.5\textwidth]{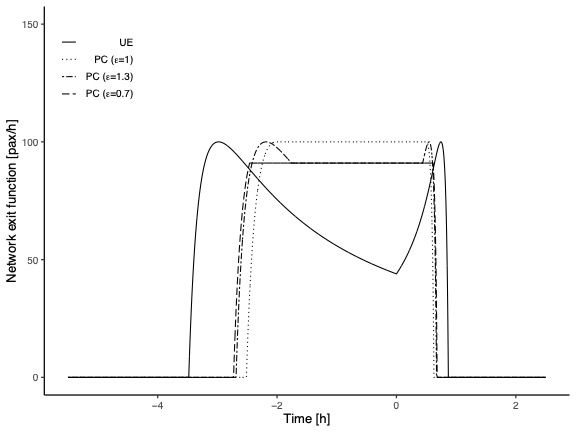}
  \caption{Impact of bias in the critical accumulaion estimation}\label{fig:epsilon_NEF}
\end{figure}

\bibliographystyle{apalike}
\bibliography{reference}
\end{document}